\def\hmath$#1${\texorpdfstring{{\rmfamily\textit{#1}}}{#1}}
\def\begsubequ{\begin{subequations}}
\def\endsubequ{\end{subequations}}
\newtheorem{lemma}{Lemma}
\newtheorem{proposition}{Proposition}
\newtheorem{corollary}{Corollary}
\newtheorem{fact}{Fact}
\newtheorem{definition}{Definition}
\newtheorem{remark}{Remark}
\newtheorem{assumption}{Assumption}
\def\bul{\noindent $\bullet\;\;$}
\def\begcen{\begin{center}}
\def\endcen{\end{center}}
\newcommand{\col}{ \mbox{col} }
\def\caly{{\cal Y}}
\def\calo{{\cal O}}
\def\calh{{\cal H}}
\def\calh{{\cal H}}
\def\call{{\cal L}}
\def\hal{{1 \over 2}}
\def\litcallinf{\ell_\infty}
\def\intnum{\mathbb{Z}}
\def\L2{{\cal L}_2}
\def\L2e{{\cal L}_{2e}}
\def\callinf{{\cal L}_\infty}
\def\bul{\noindent $\bullet\;\;$}
\def\rea{\mathbb{R}}
\def\adj{\mbox{adj}}
\def\begmat#1{\begin{bmatrix}#1\end{bmatrix}}
\def\begali#1{\begin{align}{#1}\end{align}}
\def\begalis#1{\begin{align*}{#1}\end{align*}}
\def\begequarr{\begin{eqnarray}}
\def\endequarr{\end{eqnarray}}
\def\begequarrs{\begin{eqnarray*}}
\def\endequarrs{\end{eqnarray*}}
\def\begarr{\begin{array}}
\def\endarr{\end{array}}
\def\begequ{\begin{equation}}
\def\endequ{\end{equation}}
\def\lab{\label}
\def\begdes{\begin{description}}
\def\enddes{\end{description}}
\def\begenu{\begin{enumerate}}
\def\begite{\begin{itemize}}
\def\endite{\end{itemize}}
\def\endenu{\end{enumerate}}
\def\lef[{\left[\begin{array}}
\def\rig]{\end{array}\right]}
\def\begcen{\begin{center}}
\def\endcen{\end{center}}
\def\begrem{\begin{remark}\rm}
\def\endrem{\end{remark}}
\def\begassum{\begin{assumption}}
\def\endassum{\end{assumption}}
\def\begassums{\begin{assumption*}}
\def\endassums{\end{assumption*}}
\def\begassu{\begin{ass}}
\def\endassu{\end{ass}}
\def\beglem{\begin{lemma}}
\def\endlem{\end{lemma}}
\def\begcor{\begin{corollary}}
\def\endcor{\end{corollary}}
\def\begfac{\begin{fact}}
\def\endfac{\end{fact}}
\def\ARC{{\it Annual Reviews in Control}}
\def\AJC{{\it Asian J. of Control}}
\def\TAC{{\it IEEE Trans. on Automatic Control}}
\def\AUT{{\it Automatica}}
\def\CSM{{\it IEEE Control Systems Magazine}}
\def\bfori{{\bf ori}}
\def\bfnew{{\bf new}}
\journal{Systems \& Control Letters}
\begin{document}

\begin{frontmatter}

\title{Persistent Excitation is Unnecessary for On-line Exponential Parameter Estimation: A New Algorithm that Overcomes this Obstacle}


\author[IETR,ITMO]{M. Korotina\corref{corresponding}}
\cortext[corresponding]{Corresponding author}
\ead{marina.korotina@centralesupelec.fr}

\author[ITAM]{J. G. Romero}
\author[IETR,ITMO]{S. Aranovskiy}
\author[ITMO]{A. Bobtsov}
\author[ITAM]{R. Ortega}

\address[IETR]{IETR – CentaleSup\'elec, Avenue de la Boulaie, 35576 Cesson-S\'evign\'e, France}
\address[ITMO]{Faculty of Control Systems and Robotics, ITMO University, 197101 Saint-Petersburg, Russia}
\address[ITAM]{Departamento Acad\'emico de Sistemas Digitales, ITAM, R\'io hondo 1, Progreso Tizap\'an, 01080, Mexico City, Mexico}

\begin{abstract}
    In this paper we prove that it is possible to estimate on-line the parameters of a classical vector linear regression equation  $ {\bf Y}={\Omega} \theta$, where $ {\bf Y} \in \rea^n,\;{\Omega} \in \rea^{n \times q}$ are bounded, {\em measurable} signals and ${\theta} \in \rea^q$ is a constant vector of {\em unknown} parameters, even when the regressor $\Omega$ is not persistently exciting. Moreover, the convergence of the new parameter estimator is global and exponential and is given for both, continuous-time and discrete-time implementations. As an illustration example we consider the problem of parameter estimation of a linear time-invariant system, when the input signal is not sufficiently exciting, which is known to be a necessary and sufficient condition for the solution of the problem with standard gradient or least-squares adaptation algorithms. 
\end{abstract}

\begin{keyword}
Parameter estimation, Persistent excitation, Interval excitation, Dynamic regressor extension and mixing, Nonlinear filter
\end{keyword}

\end{frontmatter}

%
\section{Introduction and Problem Formulation}
\lab{sec1}
%
One of the central problems in control and systems theory, that has attracted the attention of many researchers for several  years, is the estimation of the parameters that appear in the mathematical model that describes the systems behavior, usually a differential or a difference equation.  A typical paradigm, which appears in system identification \cite{LJUbook}, adaptive control \cite{SASBODbook}, filtering and prediction  \cite{GOOSINbook}, reinforcement learning  \cite{LEWetal}, and in many other application areas, is when the unknown parameters and the measured data are linearly related in a so-called {\em linear regression equation} (LRE). Classical solutions for this problem are gradient and least-squares estimators. The main drawback of these schemes is that convergence of the parameter estimates relies on the availability of signal excitation, a feature that is codified in the restrictive assumption of persistency of excitation (PE) of the regressor vector. Moreover, their transient performance is highly unpredictable and only a weak monotonicity property of the estimation errors can be guaranteed.  

In recent years, various efforts to ease the PE requirement have been suggested, such as concurrent \cite{CHOetal}, or composite learning \cite{PANYU} that, in the spirit of off-line estimators, incorporate the monitoring of past data to build a stack of suitable regressor vectors. Another approach that has been extensively studied by the authors is the  dynamic regressor extension and mixing (DREM) parameter estimation procedure,  which was first proposed in \cite{ARAetaltac} for continuous-time (CT) and in \cite{BELetalsysid} for discrete-time (DT) systems. The construction of DREM estimators proceeds in two steps, first, the inclusion of a {\em free, stable, linear operator} that creates an extended matrix LRE. Second, a {\em nonlinear} manipulation of the data that allows to generate, out of an $q$-dimensional LRE,  $q$ {\em scalar}, and independent, LREs. DREM estimators have been successfully applied in a variety of identification and adaptive control problems, both, theoretical and practical ones, see \cite{ORTNIKGER,ORTetaltac} for an account of some of these results. 

A very important feature of the new concurrent and composite learning estimators is that parameter convergence is guaranteed under the extremely weak assumption of {\em interval excitation} (IE)  \cite{KRERIE}. This key property was also recently established for a version of DREM reported in \cite{GERetalsysid}, that has the additional feature of ensuring  convergence in {\em finite-time}---see also \cite[Propositions 6 and 7]{ORTetaltac}. A potential drawback of this DREM algorithm is that it relies on fixing the initial conditions of some filters, which may adversely affect the robustness of the estimator,  \cite[Remark 7]{ORTetaltac} and  \cite{ORTajc}.

In the recent paper  \cite{BOBetal} a procedure to generate, from a scalar LRE, new  scalar LREs where the {\em new regressor} satisfies some excitation conditions, even in the case when the original regressor is not exciting, was proposed. Instrumental for the development of the new adaptation algorithm is to borrow the key idea of the {\em  parameter estimation based observer} proposed in  \cite{ORTetalscl}, later generalized in \cite{ORTetalaut}, to generate the new LRE that includes some {\em free} signals. Then, applying the energy pumping-and-damping injection principle of \cite{YIetal}, we select these signals to guarantee some excitation properties of the new regressor. Unfortunately, to prove that the aforementioned excitation properties guarantee parameter convergence it is necessary to assume some {\em a priori} non-verifiable  conditions   \cite[Proposition 3]{BOBetal}---in particular the absolute integrability of a signal and a non-standard requirement on the limiting behavior of some of the components of the trajectories of the estimator.

In this paper we extend the DREM procedure and, in particular the results of \cite{BOBetal}, in several directions with our main contributions summarized as follows.
\begenu
\item[{\bf C1}] We give a definite answer to the question of ensuring that the  new regressor is PE assuming only the extremely weak condition of IE of the original vector regressor.  Towards this end, still abiding to the energy pumping-and-damping injection principle of \cite{YIetal}, we propose a new selection of the free signals of the LRE generator of   \cite{BOBetal} for which the exponential convergence proof can be completed without any additional assumptions.
\item[{\bf C2}] We illustrate our result with the important example of parameter identification of linear time-invariant (LTI) systems. It is well-known that a {\em necessary and sufficient} condition for global exponential convergence of the standard gradient (or least squares) estimators is the sufficient richness condition of the plants input signal \cite[Theorems 2.7.2 and 2.7.3]{SASBODbook}, which is equivalent to the  PE of the original regressor. We prove here that this condition is {\em not necessary}, and show that it is possible to exponentially estimate the parameters of the plant under the very weak assumption of IE of the original regressor.
\item[{\bf C3}] Motivated by the {\em practical} relevance of DT implementations we extend the LRE generator procedure of   \cite{BOBetal}, which was given for the CT case, to the DT case. Also, we propose the new DT signals that yield essentially the same results of CT mentioned in {\bf C1} and {\bf C2} above. 
\endenu

The remainder of the paper is organized as follows. Some background material of the Kreisselmeier regressor extension (KRE),  DREM estimators and the LRE generator procedure of   \cite{BOBetal} is given in Section \ref{sec2}.  In Section \ref{sec3} we present our main result discussed in {\bf C1} above. In Section \ref{sec4} we briefly discuss the results. 
Section \ref{sec5} presents the application to the parameter identification of LTI systems mentioned in {\bf C2}. Simulation results of the DT version of the result are presented in Section \ref{sec6}. The paper is wrapped-up with concluding remarks and future research in  Section \ref{sec7}.
To simplify the reading, a list of acronyms is given in the Appendix at the end of the paper.\\

\noindent {\bf Notation.} $I_n$ is the $n \times n$ identity matrix. $\intnum_{>0}$ and $\intnum_{\geq 0}$ denote the positive and non-negative integer numbers, respectively. For $x \in \rea^n$, we denote the Euclidean norm $|x|^2:=x^\top x$. CT signals $s:\rea_{\geq 0} \to \rea$ are denoted $s(t)$, while for DT sequences $s:\intnum_{\geq 0} \to \rea$ we use $s(k):=s(kT_s)$, with $T_s \in \rea_{> 0}$ the sampling time. The action of an operator $\mathcal H:\callinf \to \callinf$ on a CT signal $u(t)$ is denoted as $\mathcal H[u](t)$, while for an operator $\calh:\litcallinf \to \litcallinf$ and a sequence $u(k)$ we use  $\mathcal H[u](k)$. In particular, we define the derivative operator $p^n[u](t)=:{d^n u(t)\over dt^n}$ and the delay operator  $q^{\pm n}[u](k)=:u(k \pm n)$, where $n \in \intnum_{ >0}$.  When a formula is applicable to CT signals and DT sequences the time argument is omitted. 
%
\section{Background Material}
\lab{sec2}
%
In this section we present the following preliminary results which are instrumental for the development of our new results.
\begite
\item Derivation and properties of the KRE with the DREM estimator in CT \cite[Proposition 3]{ORTNIKGER} \cite[Proposition 1]{ARAetaltac} and in DT \cite[Proposition 3]{ORTetalaut21}.
\item   Generation of new LREs for CT \cite[Proposition 1]{BOBetal}\footnote{As explained in Section \ref{sec4} there is a slight modification of the $z(t)$ dynamics with respect to the one given in \cite[Proposition 1]{BOBetal}, namely the addition of a signal $u_4(t)$, that is introduced to simplify the proof of boundedness of $z(t)$.} and DT. Since the derivation of the DT LREs is reported here for the first time, we present also the proof of the proposition.
\item  Properties of the standard gradient estimator for the new LRE in CT \cite[Proposition 1]{ARAetaltac} and in DT \cite[Proposition 3]{ORTetalaut21}.
\endite

The following definitions will be used in the sequel.

\begin{definition}\em
\lab{def1}
A bounded signal $u \in \rea^{r \times s}$ is PE \cite{SASBODbook} if 
\begalis{
	&\int_t^{t+T_a} u(s) u^\top(s)  ds \ge C_a,	\\
}
for some $C_a>0$ and $T_a>0$  and for all $t \geq 0$ in CT and 
\begalis{
	&\sum_{j=k}^{k+k_b} u(j) u^\top(j) \ge C_b,
}
for some $C_b>0$ and $k_b  \in \intnum_{>0}$  and for all  $k  \in \intnum_{\geq 0}$ in DT.

It is said to be IE \cite{KRERIE,TAO} if 
\begalis{
	&\int_0^{t_c} u(s) u^\top(s)  ds \ge C_c	\\
}
for some  $C_c>0$ and $t_c>0$ in CT and
\begalis{
	&\sum_{j=0}^{k_d} u(j) u^\top(j) \ge C_d,
}
for some  $C_d>0$ and $k_d \in \intnum_{> 0}$ in DT.
\end{definition}

\begin{proposition} [Construction of the KRE]
\lab{pro1}\em
Consider the LRE
\begin{equation}\label{orilre}
    {\bf Y}={\Omega} \theta
\end{equation}
where $ {\bf Y} \in \rea^n,\;{\Omega} \in \rea^{n \times q}$ are bounded, {\em measurable} signals and ${\theta} \in \rea^q$ is a constant vector of {\em unknown} parameters. Fix the constants $\lambda >0$, $g  >0$,  $0 < \alpha <1$, and define the signals
\begali{
\nonumber
Z &=\calh[ \Omega^\top   {\bf Y}]\\
\nonumber
{ \Psi} &=\calh[  \Omega^\top   \Omega]\\
\nonumber
 \caly &= \adj\{ \Psi\} Z\\
\lab{eq1}
 \Delta &=\det\{ \Psi\},
}
where 
$$
\calh[u]=\left\{ \begarr{ccl} {g \over p + \lambda}[u](t) & \mbox{in} & CT \\ &&
\\ {g \over q - \alpha}[u](k) & \mbox{in} & DT, \endarr \right.
$$
and $\adj\{ \cdot\}$ denotes the adjugate matrix.
\begenu
\item[{\bf P1}] The signal $\Delta$ verifies
\begin{equation}
\label{delnonneg}
\Delta \geq 0. 
\end{equation}	
\item[{\bf P2}] The following implications are true \cite[Proposition 1]{ARAetaltac2}
\begin{equation}
\label{ieimpie}
\Omega \;  \mbox{in}  \left\{ \begarr{ccl} \mbox{IE} \\ &&\\ \mbox{PE}  \endarr \right. \Rightarrow \quad	\Delta  \;  \mbox{in}  \left\{ \begarr{ccl} \mbox{IE} \\ &&\\ \mbox{PE}  \endarr \right.
\end{equation}
\item[{\bf P3}] The $q$ scalar LREs
\begequ
\label{scalre}
 \caly_i= \Delta   \theta_i,\;i \in \{1,2,\dots,q\},
\endequ
hold.
\endenu
\end{proposition}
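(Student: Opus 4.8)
The plan is to dispatch the three claims independently, since each one isolates a single structural fact. \textbf{P1} follows from positive semidefiniteness of $\Omega^\top\Omega$ together with positivity of the kernel of the operator $\calh$; \textbf{P3} is a purely algebraic identity built from linearity of $\calh$ and the adjugate relation $\adj\{A\}A=\det\{A\}I_q$; and \textbf{P2} is the statement of \cite[Proposition 1]{ARAetaltac2}, so I would only recall why it holds.

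For \textbf{P1} I would write $\Psi$ explicitly. Taking (as is implicit in the LRE construction) zero initial conditions for $\calh$, in CT one has $\Psi(t)=g\int_0^t e^{-\lambda(t-s)}\Omega^\top(s)\Omega(s)\,ds$ and in DT $\Psi(k)=g\sum_{j=0}^{k-1}\alpha^{\,k-1-j}\Omega^\top(j)\Omega(j)$. For each value of the integration/summation variable the matrix $\Omega^\top\Omega$ is symmetric and positive semidefinite, while the weights $e^{-\lambda(t-s)}$ and $\alpha^{\,k-1-j}$ are strictly positive because $\lambda>0$ and $0<\alpha<1$; a positively weighted integral (sum) of positive semidefinite matrices is positive semidefinite, so $\Psi\succeq0$ and hence $\Delta=\det\{\Psi\}\ge0$, which is \eqref{delnonneg}. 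A sign-indefinite initialization would merely add a decaying term $e^{-\lambda t}\Psi(0)$, which is harmless for the sequel but must be tracked.

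For \textbf{P3} I would substitute $ {\bf Y}=\Omega\theta$ into the definition of $Z$. Since $\theta$ is constant and $\calh$ acts entrywise as a scalar, linear, time-invariant operator, it commutes with right multiplication by $\theta$, so $Z=\calh[\Omega^\top\Omega\,\theta]=\calh[\Omega^\top\Omega]\,\theta=\Psi\theta$ (exact, again under the zero-initial-condition convention, otherwise true modulo an exponentially vanishing term). Then
\[
\caly=\adj\{\Psi\}Z=\adj\{\Psi\}\Psi\,\theta=\det\{\Psi\}\,\theta=\Delta\,\theta ,
\]
and extracting the $i$-th component gives \eqref{scalre}.

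For \textbf{P2} I would cite \cite[Proposition 1]{ARAetaltac2} and sketch the mechanism. If $\Omega$ is IE, lower-bounding the kernel on $[0,t_c]$ in the CT formula for $\Psi$ gives $\Psi(t)\succeq g e^{-\lambda t}C_c I_q\succ0$ for all $t\ge t_c$; thus $\Delta(t)>0$ there, and since $\Psi$ (hence $\Delta$) is continuous even though $\Omega$ is only measurable, this yields IE of $\Delta$. If $\Omega$ is PE, applying the same estimate over a sliding window $[t,t+T_a]$ yields a lower bound on $\Psi(t+T_a)$ that is uniform in $t$, hence $\Delta$ is bounded away from zero for all large times and is therefore PE; the DT claims follow from the analogous manipulation of $g/(q-\alpha)$. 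I do not anticipate a genuine obstacle: the only points demanding care are producing the uniform-in-$t$ bound in the PE$\Rightarrow$PE direction — which is exactly what the windowed definition of PE provides — and bookkeeping the initial-condition terms of $\calh$ in \textbf{P1} and \textbf{P3}; the former is already carried out in the cited reference.
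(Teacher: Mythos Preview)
The paper does not supply its own proof of Proposition~\ref{pro1}; it is presented as background material with citations to \cite{ORTNIKGER,ARAetaltac,ORTetalaut21,ARAetaltac2}, so there is nothing in the paper to compare your argument against. Your proposal is correct and is precisely the standard KRE/DREM derivation found in those references: positive semidefiniteness of $\Psi$ via the strictly positive kernel of $\calh$ for \textbf{P1}, the adjugate identity $\adj\{\Psi\}\Psi=\det\{\Psi\}I_q$ combined with linearity of $\calh$ and constancy of $\theta$ for \textbf{P3}, and deferral to \cite{ARAetaltac2} (plus a reasonable sketch) for \textbf{P2}.
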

\begin{proposition}[Generation of new LREs] \em
\lab{pro2}
Consider the scalar LREs  \eqref{scalre}.\footnote{To simplify the notation we omit the subindex $i$ in the proposition.} Define the dynamic extension
\begsubequ
\lab{dynext}
\begali{
\lab{z}
\mathfrak{d}[z] & =  u_{2} \caly+  u_{3}  z+  u_{4},\; z(0)=0 \\	
\lab{xi}
	 \mathfrak{d}[\xi] & =  A  \xi +  b,\;  \xi(0)=\col(0,0) 	 \\
\lab{phi}
\mathfrak{d}[  \Phi]& =  A  \Phi,\;  \Phi(0)=\col(1,0),
}
\endsubequ
where the operator $\mathfrak{d}[\cdot]$ is defined as
\begequ
\lab{oped}
\mathfrak{d}[u]=\left\{ \begarr{ccl} p[u](t) & \mbox{in} & CT \\ &&\\ q[u](k) & \mbox{in} & DT \endarr \right.
\endequ
and we defined
\begequ
\lab{ab}
 A :=\begmat{A_{11} &  u_{1} \\  u_{2}  \Delta &  u_{3}},\;  b:=\begmat{- u_{1}  z\\  u_{4}},
\endequ
with $ u_{i} \in \rea$, $i=1,\dots,4$, {\em arbitrary} signals and 
\begequ
\lab{a11}
A_{11}=\left\{ \begarr{ccl} 0 & \mbox{in} & CT \\ &&\\ 1 & \mbox{in} & DT \endarr \right.
\endequ
The new LRE
\begin{equation}
\label{newlre}
Y  = {   \Phi}_{2}  \theta, 
\end{equation}
holds with
$$
{Y}:= z-  \xi_{2}.
$$
\end{proposition}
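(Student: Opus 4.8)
The plan is to exhibit, in closed form, a solution of the affine dynamics \eqref{xi} written in terms of the unknown $\theta$ and the measurable signal $z$, and then to use the homogeneous dynamics \eqref{phi} to account for the mismatch in the initial conditions. Concretely, I would introduce the augmented signal $\chi := \col(\theta,z) \in \rea^2$ and claim that it satisfies \emph{exactly} the same recursion/ODE as $\xi$, namely $\mathfrak{d}[\chi] = A\chi + b$, but with initial value $\chi(0) = \col(\theta,0)$ rather than $\col(0,0)$.

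To verify the claim one simply substitutes. The second component of $A\chi+b$ equals $u_2\Delta\,\theta + u_3 z + u_4$, which coincides with the right-hand side of \eqref{z} once $\Delta\theta$ is replaced by $\caly$ through the scalar LRE \eqref{scalre}; this is the only place where the regression equation is used. For the first component, the off-diagonal entry $u_1$ of $A$ multiplies $z$ while the first entry of $b$ is $-u_1 z$, so these cancel and one is left with $\mathfrak{d}[\theta] = A_{11}\theta$. In CT this reads $\dot\theta = 0$ and in DT it reads $\theta(k+1) = 1\cdot\theta(k)$; both hold because $\theta$ is a constant, which is precisely why \eqref{a11} sets $A_{11}=0$ in CT and $A_{11}=1$ in DT. Hence $\chi$ solves $\mathfrak{d}[\chi]=A\chi+b$.

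Next I would introduce the error $e := \chi - \xi - \Phi\theta$. Since $\theta$ is a constant scalar, $\Phi\theta$ solves the homogeneous equation $\mathfrak{d}[\Phi\theta] = A(\Phi\theta)$ with initial value $\Phi(0)\theta = \col(\theta,0)$. Subtracting the equations satisfied by $\chi$, $\xi$ and $\Phi\theta$ and using linearity of $\mathfrak{d}[\cdot]$, the affine terms $\pm b$ cancel and one obtains $\mathfrak{d}[e] = A e$ with $e(0) = \col(\theta,0) - \col(0,0) - \col(\theta,0) = \col(0,0)$. Therefore $e \equiv 0$: in DT by a one-line induction on $k$, and in CT by uniqueness of the solution of the linear (Carathéodory) differential equation $\mathfrak{d}[e]=Ae$. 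Reading off the second component of the resulting identity $\chi = \xi + \Phi\theta$ gives $z - \xi_2 = \Phi_2\theta$, i.e. \eqref{newlre}; the first component yields the (unused) identity $\theta = \xi_1 + \Phi_1\theta$.

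The computation is entirely routine; the only genuinely delicate point is keeping the CT and DT cases straight, in particular checking that the single definition \eqref{a11} of $A_{11}$ forces the first component of $\chi$ to propagate as the constant $\theta$ in both settings. A secondary point, needed only so that the statement is not vacuous, is that the signals in \eqref{dynext} are globally defined; this follows from the fact that \eqref{dynext} is linear in $(z,\xi,\Phi)$, so the systems are forward complete and uniqueness of solutions applies on all of $\rea_{\geq 0}$ (respectively $\intnum_{\geq 0}$).
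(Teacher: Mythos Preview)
Your proof is correct and follows essentially the same route as the paper: introduce the augmented signal $\chi=\col(\theta,z)$ (the paper calls it $x$), verify it satisfies the same affine recursion as $\xi$ using \eqref{scalre} and the choice of $A_{11}$, and then use $\Phi$ to absorb the initial-condition mismatch. The only cosmetic difference is that the paper defines the error as $\xi-x$ and identifies it with $-\theta\,\Phi$ via matching initial data, whereas you subtract $\Phi\theta$ up front and show the resulting error starts at zero; the content is identical.
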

\begin{proof}
 
 [DT version]\footnote{The proof of the CT case may be found in \cite{BOBetal}.} Notice that, since $\theta$ is constant, we can write
\begequ
\lab{thek}
	\theta(k+1)  =\theta(k)+ u_1(k) [z(k)-z(k)], \theta(0)=\theta.
\endequ
Combining \eqref{z} and \eqref{thek}, and using \eqref{scalre}, we can write the ``virtual" LTV system
\begequ
\label{xk}
x(k+1) = A(k)x(k) + b(k),
\endequ
with $x(k) :=\col(\theta(k), z(k))$ and initial conditions
\begequ
\lab{icx}
x(0) =\begmat{\theta \\ 0}.
\endequ

Define the error signal
\begequ
\label{e}
e(k):=\xi (k)- x(k),
\endequ
which satisfies $e(k+1)=A(k) e(k)$. Consequently, from \eqref{e} and the properties of the signals  $\Phi(k)$ defined in \eqref{phi}, we get
\begali{
\nonumber
x(k) &=\xi(k) - \Phi(k) e_1(0)\\
\lab{xxik}
&=\xi (k)+ \Phi(k) \theta
}
where, to get the second identity, we took into account \eqref{icx} and the initial conditions in \eqref{xi}.

Now
$$
\begmat{\caly(k)\\ z(k)}=\begmat{ \Delta(k) &0 \\ 0 &1}x(k)=\begmat{ \Delta(k) &0 \\ 0 &1}\Big(\xi (k)+\Phi(k)\theta \Big).
$$
The proof is completed defining
\begequ
\lab{calyk}
Y(k) =\begmat{ Y_1(k)\\  Y_2(k)}:=\begmat{\caly(k)\\ z(k)}-\begmat{ \Delta(k)\xi_1(k) \\ \xi_2(k)}.
\endequ
  \end{proof}
  
 \begin{proposition}[Convergence properties of the gradient estimator]
\lab{pro3}\em
Consider the scalar LRE  \eqref{newlre} of  Proposition \ref{pro2} with the gradient estimator
\begequ
\label{eq4}
\mathfrak{p}[\hat{ \theta}] =  \gamma   \Phi_{2} \left( Y -  \Phi_{2} \, \hat { \theta}\right),
\endequ
with $ \gamma>0$ and the operator $\mathfrak{p}[\cdot]$ is defined as 
\begequ
\lab{opep}
\mathfrak{p}[u]=\left\{ \begarr{ccl} p[u](t) & \mbox{in} & CT \\ &&\\ (q-1)[u](k) & \mbox{in} & DT \endarr \right.
\endequ
The following equivalences are true:
\begalis{
 \hat{ \theta} \to \theta \; &  \Leftrightarrow \; \Phi_{2} \not \in \left\{ \begarr{ccl} \call_2 &  \mbox{in} & CT \\ &&\\ \ell_2& \mbox{in} & DT \endarr \right.\\
\hat{ \theta}\to \theta,\;(exp)\;  & \Leftrightarrow \;	\Phi_{2}  \in \mbox{PE}.
}
\end{proposition}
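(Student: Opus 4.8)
The plan is to exploit the feature that, thanks to the DREM construction, \eqref{newlre} is a \emph{scalar} regression, so the error equation is a \emph{scalar} linear time-varying recursion that can be solved in closed form; no Lyapunov argument or general LTV stability machinery is needed. \textbf{Step 1 (error dynamics).} I would introduce the parameter error $\tilde\theta:=\hat\theta-\theta$ and substitute $Y=\Phi_2\theta$ into \eqref{eq4} to get, in CT, $\dot{\tilde\theta}=-\gamma\Phi_2^2(t)\,\tilde\theta$, and in DT, $\tilde\theta(k+1)=\big(1-\gamma\Phi_2^2(k)\big)\tilde\theta(k)$. Integrating, respectively iterating, these scalar equations yields the explicit formulas
\begin{equation*}
\tilde\theta(t)=\exp\!\Big(\!-\gamma\!\int_0^t\Phi_2^2(s)\,ds\Big)\tilde\theta(0),\qquad
\tilde\theta(k)=\Big(\prod_{j=0}^{k-1}\big(1-\gamma\Phi_2^2(j)\big)\Big)\tilde\theta(0).
\end{equation*}
In CT, $|\tilde\theta(t)|$ is automatically nonincreasing; in DT I would use the standing boundedness of $\Phi_2$ (inherited from boundedness of $\Omega$, hence $\Delta$, and of the filter states) together with a sufficiently small $\gamma$ — or the normalisation of the gradient used in the cited reference — so that each factor $1-\gamma\Phi_2^2(k)$ lies in $(0,1)$ and $|\tilde\theta(k)|$ is likewise nonincreasing.

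\textbf{Step 2 (first equivalence).} By the closed-form solution, $\hat\theta\to\theta$ \emph{for every initial condition} is equivalent to $\int_0^t\Phi_2^2(s)\,ds\to\infty$ (CT), respectively $\sum_{j=0}^{k-1}\Phi_2^2(j)\to\infty$ (DT). Since $\Phi_2^2\ge0$ this integral/series is monotone, hence it diverges iff $\Phi_2\notin\call_2$ (resp. $\Phi_2\notin\ell_2$): the ``if'' part is immediate, while if $\Phi_2\in\call_2$ then $\tilde\theta(t)\to e^{-\gamma M}\tilde\theta(0)$ with $M:=\int_0^\infty\Phi_2^2<\infty$, which is nonzero for $\tilde\theta(0)\neq0$, so $\hat\theta\to\theta$ fails. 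For DT the same argument runs after taking logarithms, using $\gamma\Phi_2^2(j)\le-\ln\!\big(1-\gamma\Phi_2^2(j)\big)\le\frac{\gamma\Phi_2^2(j)}{1-c}$ for $\gamma\Phi_2^2(j)\le c<1$, so that $\sum_j-\ln\!\big(1-\gamma\Phi_2^2(j)\big)$ and $\sum_j\Phi_2^2(j)$ diverge together.

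\textbf{Step 3 (second equivalence).} Reading ``exponential'' in the uniform sense $|\tilde\theta(t_0+t)|\le Ke^{-\rho t}|\tilde\theta(t_0)|$ for all $t_0,t\ge0$ (and analogously in DT), the identity $\tilde\theta(t_0+t)=\exp\!\big(-\gamma\int_{t_0}^{t_0+t}\Phi_2^2\big)\tilde\theta(t_0)$ shows this is equivalent to $\int_{t_0}^{t_0+t}\Phi_2^2(s)\,ds\ge\frac{\rho}{\gamma}t-\frac{\ln K}{\gamma}$ uniformly in $t_0$. For ``$\Phi_2$ PE $\Rightarrow$ exp'', I would chain PE windows: summing $\int_\tau^{\tau+T}\Phi_2^2\ge\delta$ over $\lfloor t/T\rfloor$ consecutive windows gives precisely such a linear lower bound, hence $K=e^{\gamma\delta}$, $\rho=\gamma\delta/T$. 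For the converse, evaluating the uniform estimate at a fixed horizon $T^\star$ large enough that $\frac{\rho}{\gamma}T^\star-\frac{\ln K}{\gamma}>0$ yields $\int_{t_0}^{t_0+T^\star}\Phi_2^2\ge\delta>0$ for all $t_0$, i.e., PE of the scalar $\Phi_2$. The DT statement follows verbatim with integrals replaced by finite sums and the logarithmic inequalities of Step 2.

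\textbf{Main obstacle.} The genuine mathematics (a scalar closed-form solution) is elementary; the care-points I expect to spend effort on are: (i) making precise that ``$\hat\theta\to\theta$'' and ``$(exp)$'' are meant to hold for \emph{all} initial conditions — without this the equivalences are false — and the choice of the uniform notion of exponential convergence for the second equivalence; and (ii) the DT well-posedness issue, namely guaranteeing that the factors $1-\gamma\Phi_2^2(k)$ stay in $(0,1)$, which is where boundedness of $\Phi_2$ and the choice/normalisation of $\gamma$ enter and is the only place the CT and DT implementations differ non-trivially.
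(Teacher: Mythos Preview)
Your argument is correct and is exactly the standard one for scalar gradient estimators obtained via DREM. Note, however, that the paper does \emph{not} supply its own proof of this proposition: it is listed under ``Background Material'' with references to \cite[Proposition~1]{ARAetaltac} for CT and \cite[Proposition~3]{ORTetalaut21} for DT, and no proof appears in the text. Your closed-form solution of the scalar error equation, followed by the direct reading-off of the $\call_2/\ell_2$ and PE conditions, is precisely the argument used in those references, so there is nothing to contrast.

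Your care-point~(ii) is well taken: in DT the factor $1-\gamma\Phi_2^2(k)$ must be kept in $(0,1)$, and this is handled in the cited DT reference either by a smallness condition on $\gamma$ relative to the bound on $\Phi_2$ or by normalisation; the present paper sidesteps the issue since Proposition~\ref{pro4} furnishes the explicit bound $\Phi_2^2(k)\le 2\beta$, so any $\gamma<1/(2\beta)$ suffices.
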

%
\section{Main Result}
\lab{sec3}
%
In this section we give the main result of the paper, namely the selection of the signals $u_i,\;i=1,\dots,4$, in  \eqref{dynext}-\eqref{ab} that ensure the new regressor $\Phi_2$ is PE under the very weak assumption of IE of the regressor $\Omega$ of the original LRE \eqref{orilre}. In the DT case, it is necessary to impose an additional assumption on a tuning parameter.

\begin{proposition}\em
\lab{pro4}
Consider the LRE \eqref{orilre} and the KRE construction of Proposition \ref{pro1}. Assume $\Omega$ is IE. Consider the dynamics \eqref{dynext}-\eqref{ab} with the signals
\begequ
\lab{u}
u(t)=\begmat{-\mu\Delta(t) \Phi_1(t)\\ \mu \Phi_1(t)\\   -\tilde V(t) \\  [\tilde V(t)-\mu]  z(t)},
\endequ
in CT and 
\begequ
\lab{uk}
u(k)=\begmat{-T\mu\Delta(k) \Phi_1(k)\\ T\mu \Phi_1(k)\\  1 -T\tilde V(k)\\  [T\tilde V(k)-b]z(k)}
\endequ
in DT, where
\begequ
\lab{tilv}
   \tilde V := \frac{1}{2}\left(\Phi_1^2 + \Phi_2^2\right)-\beta,
\endequ
with $0 < b < 1$, $\beta>\frac{1}{2}$,  $\mu >0$ and $T  >0$ is a small number such that we can assume
\begequ
\lab{tsqu}
   T^2 \approx 0.
\endequ 

\begenu
	\item[{\bf F1}]  The signals $z$, $\xi$ and $\Phi$ are bounded.
	\item[{\bf F2}]  $\Phi_2$ is PE.
\endenu
\end{proposition}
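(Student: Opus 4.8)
The plan is to expose the Lyapunov‐type structure that the specific choice \eqref{u} (resp.\ \eqref{uk}) induces on the $\Phi$‐subsystem, use it for \textbf{F1}, and then deduce \textbf{F2} from two elementary monotonicity properties of $\Phi_2$ together with the IE hypothesis. I describe the CT case; the DT case is entirely parallel once the $O(T^2)$ terms are dropped via \eqref{tsqu}. \emph{Step 1 ($\tilde V$ and boundedness of $\Phi$).} Substituting \eqref{u} into \eqref{phi}--\eqref{ab} (with $A_{11}=0$) gives $\dot\Phi_1=-\mu\Delta\,\Phi_1\Phi_2$ and $\dot\Phi_2=\mu\Delta\,\Phi_1^2-\tilde V\Phi_2$, and a one‐line computation yields $\dot{\tilde V}=\Phi_1\dot\Phi_1+\Phi_2\dot\Phi_2=-\tilde V\Phi_2^2$, i.e.\ $\tilde V(t)=\tilde V(0)\exp\!\big(-\!\int_0^t\Phi_2^2(s)\,ds\big)$. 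Since $\Phi(0)=\col(1,0)$ and $\beta>\tfrac12$ we have $\tilde V(0)=\tfrac12-\beta<0$, hence $\tilde V(t)\in[\tfrac12-\beta,0)$ for all $t$, so $\tfrac12(\Phi_1^2+\Phi_2^2)=\tilde V+\beta\in[\tfrac12,\beta)$. Thus $\Phi$ is bounded, and $\Phi_1(t)=\exp\!\big(-\mu\!\int_0^t\Delta\Phi_2\big)>0$ for all $t$.

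\emph{Step 2 (boundedness of $z$, $\xi$; \textbf{F1}).} The choice \eqref{u} is designed so that the terms in \eqref{z} collapse: $u_3 z+u_4=-\tilde V z+(\tilde V-\mu)z=-\mu z$, whence $\dot z=-\mu z+\mu\Phi_1\caly$. Since $\mu>0$, $\Phi_1$ is bounded by Step 1, and $\caly=\Delta\theta$ is bounded ($\Delta=\det\Psi$ being the determinant of the output of the stable filter $\calh$ fed by the bounded signal $\Omega^\top\Omega$), $z$ is bounded. Finally, as in the proof of Proposition~\ref{pro2} the error $e:=\xi-x$, $x:=\col(\theta,z)$, satisfies $\mathfrak{d}[e]=Ae$ with $e(0)=\xi(0)-x(0)=-\theta\,\Phi(0)$, so $e=-\theta\Phi$ and $\xi=\col(\theta,z)-\theta\Phi$ is bounded. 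This proves \textbf{F1}.

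\emph{Step 3 (\textbf{F2}).} From $\tfrac{d}{dt}\big(\Phi_2\,e^{\int_0^t\tilde V}\big)=e^{\int_0^t\tilde V}(\dot\Phi_2+\tilde V\Phi_2)=e^{\int_0^t\tilde V}\mu\Delta\Phi_1^2\ge 0$, the vanishing of this quantity at $t=0$, and $e^{\int_0^t\tilde V}>0$, we get $\Phi_2(t)\ge 0$ for all $t$. But then $\dot\Phi_2=\mu\Delta\Phi_1^2+(-\tilde V)\Phi_2\ge 0$ (using $\tilde V<0$, $\Delta\ge 0$ from \textbf{P1}, $\Phi_2\ge 0$), so $\Phi_2$ is monotonically nondecreasing; being bounded it converges, $\Phi_2(t)\to\Phi_2^{\infty}$. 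If $\Phi_2\equiv 0$ then $\dot\Phi_2\equiv 0\Rightarrow\mu\Delta\Phi_1^2\equiv 0$ while $\dot\Phi_1\equiv 0\Rightarrow\Phi_1\equiv 1$, forcing $\Delta\equiv 0$ and contradicting that $\Delta$ is IE (which holds by \textbf{P2} since $\Omega$ is IE). Hence $\Phi_2(t_1)>0$ for some $t_1$, and by monotonicity $\Phi_2(t)\ge\Phi_2(t_1)>0$ for all $t\ge t_1$. Taking $T_a:=t_1+1$ and $C_a:=\Phi_2(t_1)^2>0$, every window $[t,t+T_a]$ with $t\ge 0$ contains a unit subinterval lying in $[t_1,\infty)$, so $\int_t^{t+T_a}\Phi_2^2\ge C_a$: $\Phi_2$ is PE, which is \textbf{F2}.

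\emph{DT case and main obstacle.} In DT the same four quantities $\tilde V,z,\Phi_1,\Phi_2$ are tracked; with \eqref{tsqu} one obtains $\tilde V(k+1)\approx\tilde V(k)\big(1-T\Phi_2^2(k)\big)$, $z(k+1)=(1-b)z(k)+T\mu\Phi_1\caly(k)$ and $\Phi_2(k+1)=\big(1-T\tilde V(k)\big)\Phi_2(k)+T\mu\Delta\Phi_1^2(k)$, so $\beta>\tfrac12$ again makes $\tilde V(0)<0$, $0<b<1$ makes the $z$‐recursion contractive, and $\Phi_2$ is again nonnegative and nondecreasing — hence PE by the same endgame, which is why the extra condition on $T$ is needed there and not in CT. The only genuinely delicate point of the whole argument is the realization that the signals \eqref{u}/\eqref{uk} make $\tilde V$ obey an autonomous decay law (boundedness for free) and make $\Phi_2$ sign‐definite and monotone; once these two facts are in hand, nothing beyond ``IE $\Rightarrow\Delta\not\equiv 0$'' is required, and none of the a~priori non‐verifiable assumptions of \cite[Proposition~3]{BOBetal} appear. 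I expect the sign‐definiteness of $\Phi_2$ to be the conceptual crux, and carefully bookkeeping which DT terms are truly $O(T^2)$ to be the only tedious part.
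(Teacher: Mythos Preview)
Your proposal is correct and follows essentially the same approach as the paper: derive the closed‐loop $\Phi$‐dynamics, show $\dot{\tilde V}=-\tilde V\Phi_2^2$ (so $\tilde V\in[\tfrac12-\beta,0]$ gives $\Phi$ bounded), collapse the $z$‐equation to a stable LTI filter via $u_3z+u_4=-\mu z$ (resp.\ $(1-b)z$), recover $\xi$ from $\xi=x-\theta\Phi$, and finally use $\Phi_2\ge 0$, $\dot\Phi_2\ge 0$, and IE of $\Delta$ to get $\Phi_2(t)\ge\rho>0$ eventually, hence PE. Your integrating‐factor argument for $\Phi_2\ge 0$ and your explicit PE window are slightly cleaner than the paper's direct sign argument and footnoted ``well‐known'' fact, but the logic is identical.
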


\begin{proof}
The proof proceeds in the following three steps.
\begite
\item Proof of boundedness of $\Phi$.
\item Proof of PE of $\Phi_2$.
\item Proof of boundedness of $z$ and $\xi$.
\endite
Although the arguments for the CT and the DT case are similar, for the sake of clarity, we present them in separate subsections whenever needed.
\subsubsection*{(i) Proof of boundedness of $\Phi(t)$}
Replacing \eqref{u} in \eqref{dynext}-\eqref{ab} yields
\begsubequ
\lab{dotphi}
\begin{align}
	\dot{\Phi}_1(t) &= -\mu\Delta(t)\Phi_2(t)\,\Phi_1(t), \label{eq:phi1:beta} \\
	\dot{\Phi}_2(t) &= \mu\Delta(t)\Phi_1^2(t) -\tilde V(t)  \Phi_2(t). \label{eq:phi2:beta}
\end{align}
\endsubequ
From \eqref{tilv} and the equations of $\Phi(t) $ above we immediately get
\begequ
\lab{dottilv}
\dot {\tilde V}(t)=-\Phi^2_2(t) \tilde V(t),
\endequ
from which we conclude the invariance of the set
\begequ
\lab{ome}
\Omega:=\{\Phi \in \rea^2|  \frac{1}{2}\left(\Phi_1^2 + \Phi_2^2\right)=\beta\}.
\endequ

Now, invoking the initial condition constraint $\Phi_1(0) = 1$, and $\Phi_2(0) =0$, we have
$$
\begin{aligned}
& \hal(\Phi^2_1(0)+\Phi^2_2 (0))  =\hal \\
& \Leftrightarrow \tilde V(0)  + \beta = \hal \\
& \Leftrightarrow \tilde V(0)  = \hal  - \beta \\
&  \Rightarrow  \tilde V(0)  < 0,
\end{aligned}
$$
where we used $\beta>\frac{1}{2}$ to get the last implication. The latter inequality implies that the trajectory starts {\em inside} the disk delimited by the set $\Omega$.  This, together with the invariance of the set implies, that the {\em whole trajectory} $(\Phi_1(t), \Phi_2 (t))$ is inside this disk, that is,
\begequ
\lab{insdis}
 \tilde V(t) \leq 0,\;\forall t {\ge} 0. 
\endequ
Replacing the bound above in \eqref{dottilv} we have that $\dot {\tilde V}(t) \geq 0$ from which we conclude that $\tilde V(t)$ is {\em non-decreasing}, hence 
\begequ
\lab{lowboutilv}
\tilde V(t) \geq  \tilde V(0)  = \hal  - \beta.
\endequ
Combining the bounds \eqref{insdis} and \eqref{lowboutilv} we conclude that
$$
1 \leq \Phi^2_1(t)+\Phi^2_2 (t) \leq 2 \beta.
$$
The bounds above can be further sharpened as follows. From the constant $\mu(t) \ge 0$, \eqref{eq:phi2:beta}, \eqref{insdis},  \eqref{delnonneg}, and recalling the initial condition $\Phi_2(0)=0$, it follows that
$\dot{\Phi}_2(t) \ge 0$ and, consequently, $\Phi_2(t) \ge 0$ for all $t$. Moreover, 
\[
	\dot{\Phi}_1(t) = -\mu\Delta(t)\Phi_2(t)\,\Phi_1 \le 0,
\]
hence $\Phi_{1}(t)$ is not increasing and, recalling that $\Phi_1(0)=1$, it follows that  $0\le \Phi_1(t) \le 1$. 

In summary, the whole trajectory $\Phi(t)$ lives in the gray section indicated in Fig. \ref{fig1}.

\begin{figure}[h]
    \centering
	\includegraphics[width = 0.4\textwidth]{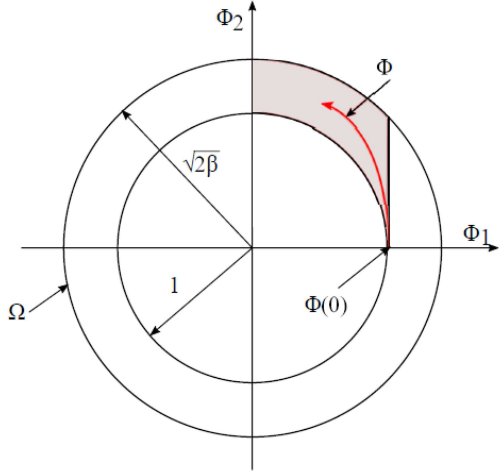}
    \caption{Behavior of the trajectory $\Phi$.}
    \label{fig1}
\end{figure}

\subsubsection*{(ii) Proof of boundedness of $\Phi(k)$} 
Replacing \eqref{uk} in  \eqref{dynext}-\eqref{ab} yields the dynamics
\begequ
\lab{clolook}
\Phi(k+1)= \begmat{1 & - T \mu \Delta(k) \Phi_1(k)  \\ T \mu \Delta(k) \Phi_1(k) & 1-T \tilde V(k) } \Phi(k).  
\endequ
Hence, computing
\begalis{
 |\Phi(k+1)|^2 =&  \Phi^\top (k) \begmat{    1  &  T \mu\Delta (k)\Phi_1(k)  \\
-T \mu\Delta(k) \Phi_1 (k)&  1-  T \tilde V(k) } \begmat{    1  & - T \mu\Delta(k) \Phi_1(k)  \\
T \mu\Delta (k)\Phi_1(k) &  1-  T \tilde V(k) }   \Phi(k)  \\
=& \Phi^\top (k) \begmat{    1  &  0  \\ 0 &  1- 2 T \tilde V(k)}   \Phi(k) + T^2 \Phi^\top (k) \begmat{   \mu^2 \Delta^2(k) \Phi^2_1(k)    &  -\mu\Delta(k) \Phi_1(k)   \\- \mu\Delta(k) \Phi_1 (k)&   \mu^2 \Delta^2(k) \Phi^2_1(k) +\tilde V^2(k) }   \Phi(k)\\
=& |\Phi (k)|^2 - 2 T \tilde V(k)  \Phi^2_2(k) + T^2 \Phi^\top (k) \begmat{   \mu^2 \Delta^2(k) \Phi^2_1(k)    &  -\mu\Delta(k) \Phi_1(k)   \\- \mu\Delta(k) \Phi_1 (k)&   \mu^2 \Delta^2(k) \Phi^2_1(k) +\tilde V^2(k) }   \Phi(k).
}
Invoking that $T^2 \approx 0$ we obtain 
\begequ
\lab{tilvinck}
\tilde V(k+1)= \tilde V(k) -T \tilde V (k) \Phi^2_2 (k)
\endequ
from which we conclude the invariance of the set \eqref{ome}.

Following {\em verbatim} the reasoning carried out in CT we conclude that the {\em whole trajectory} $(\Phi_1(k), \Phi_2 (k))$ is inside  the disk described by the set $\Omega$, that is,
\begequ
\lab{insdisk}
 \tilde V(k) \leq 0,\;\forall k \in \intnum_{\geq 0}. 
\endequ
Replacing the bound above in \eqref{tilvinck} we have that  $\tilde V(k)$ is {\em non-decreasing}, hence 
\begequ
\lab{lowboutilvk}
\tilde V(k) \geq  \tilde V(0)  = \hal  - \beta.
\endequ
Combining the bounds \eqref{insdisk} and \eqref{lowboutilvk} we conclude that
$$
1 \leq \Phi^2_1(k)+\Phi^2_2 (k) \leq 2 \beta.
$$

As done in CT the bounds above can be further sharpened as follows. From \eqref{clolook} we have that
$$
\Phi_2(k+1)=  [1 -T \tilde V(k) ] \Phi_2(k)+T \mu\Delta(k) \Phi^2_1(k).
$$
From  $\mu > 0$, $\Delta(k) \ge 0$ and \eqref{insdis} it follows that $\Phi_2(k)$ is {\em non-decreasing}. Moreover, recalling the initial condition $\Phi_2(0)=0$, it follows that $\Phi_2(k) \ge 0$ for all $k \in \intnum_{\geq 0}$. 

Now, from \eqref{clolook} we also have that 
$$
\Phi_1(k+1)= [1  - T \mu\Delta(k) \Phi_2(k)] \Phi_1(k).
$$
Since, the term in brackets is not bigger than one, we have that the sequence  $\Phi_{1}(k)$ is {\em non-increasing} and, recalling that $\Phi_1(0)=1$, it follows that  $0\le \Phi_1(k) \le 1$. 

In summary, the whole trajectory $\Phi(k)$ lives in the gray section indicated in Fig. \ref{fig1}.

\subsubsection*{(iii) Proof of PE of $\Phi_2$}
The assumption that $ \Delta(t)$ in IE implies that there exists a $t_0 \in (0,t_c]$ such that $ \Delta(t_0) >0$, which in turn implies that there exists a $t_\rho>0$ such that $\rho:=\Phi_2(t_\rho)>0$. Since we proved above that $ \Phi_2(t)$ is non-decreasing we have that
$$
 \Phi_2(t) \geq \rho>0,\;\forall t \geq t_\rho.
$$
Consequently,
\[
	\operatorname*{lim\,inf}_{t\to \infty}\Phi_2(t) >0,
\]
and $\Phi_2(t)$ is PE.\footnote{It is well-known that a scalar signal (with a bounded derivative) that {\em does not} converge to zero is PE.}

Exaclty the same arguments can be used in DT to prove that
\[
	\operatorname*{lim\,inf}_{k \to \infty}\Phi_2(k) >0,
\]
hence $\Phi_2(k)$ is PE. 

\subsubsection*{(iii) Boundedness of $z$ and $\xi$}
From \eqref{xxik}---and the equivalent relation in CT \cite{BOBetal}---we have that
$$
x=\begmat{\theta \\ z}=\xi + \Phi \theta.
$$
Since we proved that $ \Phi$ is bounded, to establish boundedness of $\xi$ it suffices to prove that $z$ is bounded. 
Towards this end, we replace \eqref{u} or  \eqref{uk} in the $z$ dynamics of  \eqref{z} to get
\begalis{
	\dot{z}(t) & = -\tilde V(t)  z (t)+ \mu\Phi_1 (t)\caly(t)	+u_4(t)\\
	&= -  \mu z(t)+ \mu\Phi_1(t) \caly(t),
}
in CT and
\begalis{
	{z}(k) & =[1 -T\tilde V(k)]  z(k) + T \mu\Phi_1(k) \caly(k)	+u_4(k)\\
	&= -  (1-b) z(k)+ T \mu\Phi_1(k) \caly(k),
}
in DT. In both cases, we are dealing with asymptotically stable LTI filters with bounded input, 

completing the proof.
\end{proof}
%
\section{Discussion}
\lab{sec4}
%
The following remarks are in order.\\

\bul
The main message of Proposition \ref{pro4} is that it is possible to estimate the parameters of a classical vector LRE \eqref{orilre} even when the regressor $\Omega$ is not PE---the convergence of the new parameter estimator being global and exponential.\footnote{Additional properties of the DREM estimator, like element-by-element monotonicity of the parameter errors, may be found in \cite{ORTetaltac}.}\\

\bul 
The choice of the signals $u(k)$ given in \eqref{uk} is motivated by the CT dynamics \eqref{dotphi}. Indeed, the DT dynamics of $\Phi(k)$ given in \eqref{clolook} is the {\em Euler approximation} of \eqref{dotphi}. It is well-known \cite{STOBULbook} that the Euler approximation is a numerical integration method of order one whose global approximation error
is $\calo(T^2)$.\footnote{$f(t,T)$ is  ``big o of $T^2$" if and only if  $|f(t,T)| \leq C T^2$ with $C$ a constant independent of $t$ and $T$.} This explains our need to impose the assumption \eqref{tsqu} in our stability analysis.\\

\bul Although it is possible to consider other (higher order) discretization methods of the $\Phi(t)$ dynamics  \eqref{dotphi}, the resulting discretized dynamics cannot be matched with the $A(k)$ matrix given in \eqref{ab} due to the fact that---as seen in  \eqref{a11}---it is {\em necessary} to have the term $A_{11}(k)=1$. A condition that stymies the selection of a more precise discretization method. \\ 

\bul
Another alternative to remove the undesirable  assumption \eqref{tsqu} is to directly pose a regulation problem for the DT system identified in Proposition \ref{pro2}, that is
$$
\Phi(k+1)= \begmat{1 & u_1(k)  \\  u_2(k) \Delta(k) & u_3(k) } \Phi(k).  
$$
The task is to select  the signals $u_i(k),\;i=1,2,3$, that insure boundedness of all signals and that $\Phi_2(k)$ is PE. Unfortunately,  this a highly complicated nonlinear control problem with non-standard regulation objectives.\\

\bul
In \cite{BOBetal} the proof of boundedness of the signal $z(t)$ is quite involved and requires the additional of an unverifiable absolute integrability assumption \cite[Equation (14)]{BOBetal}. This is due to the fact that the new free signal $u_4(t)$  in the vector $b(t)$ in \eqref{ab}, was not included in  \cite{BOBetal}. It is cleat that the addition of this signal does not affect the main result, and trivializes the proof of  boundedness of $z(t)$.
%
\section{Application to Identification of CT Systems in Unexcited Conditions}
\lab{sec5}
%
To illustrate the result of Proposition \ref{pro4}, we consider in this section the problem of parameter estimation of an CT LTI system and choose, as an example, the system:
\begin{equation}\label{eq:y_BAu}
    y_p(t) = \frac{B(p)}{A(p)}[u_p](t) = \frac{b_1p + b_0}{p^2 + a_1p + a_0}[u_p](t),
\end{equation}
where $u_p(t)\in \rea$ and $y_p(t) \in \rea$ are the control and output signals, respectively. Following the standard procedure \cite[Subsection 2.2]{SASBODbook} we derive the LRE \eqref{orilre} for the system \eqref{eq:y_BAu} as follows
\begin{equation}\label{eq:phi_parametrization}
     {\bf Y}(t):=y_p(t),\;{\Omega}(t) :=\begin{bmatrix}
    \frac{F(p) B(p)}{A(p)} \\
    F(p)
    \end{bmatrix} [u_p](t) , \  F(p):=\frac{1}{\lambda(p)}\begin{bmatrix}
    1 \\
    p \\
    \vdots \\
    p^{n-1}
    \end{bmatrix}, \ \theta :=\begin{bmatrix}
    \lambda_{0}-a_{0} \\
    \vdots \\
    \lambda_{n-1}-a_{n-1} \\
    b_{0} \\
    \vdots \\
    b_{n-1}
    \end{bmatrix},
\end{equation}
with $\lambda(p)=\sum_{i=0}^{n} \lambda_{i} p^{i}, \lambda_{n}=1$, an arbitrary Hurwitz polynomial.\\

We consider the following simulation scenarios. 
\begenu
\item[{\bf S1}] Estimation of the vector $\theta$ with the standard gradient estimator
\begin{equation}
\label{eq:Grad_est}
    \dot{\hat{\theta}}(t)=\Gamma {\Omega}(t)\left[ {\bf Y}(t)-{\Omega}^\top(t) \hat{\theta}(t)\right], \ \Gamma>0.
\end{equation}
\item[{\bf S2}]  Estimation of the parameters $\theta_i$ using the scalar regression form \eqref{scalre} obtained via the KRE and DREM of Proposition \ref{pro1}, that is
\begin{equation}
\label{eq:gradientDREM}
    \dot{\hat{\theta}}_{i}(t) =\gamma_{i} \Delta(t) \left[\caly_i (t)-\Delta(t)  \hat{\theta}_{i}(t) \right],\;\gamma_i > 0.
\end{equation}
\item[{\bf S3}]  Estimation of the parameters $\theta_i$ using the new scalar regression form  \eqref{newlre} obtained via the LRE generator of Proposition \ref{pro4}, that is
\begin{equation}\label{eq:gradAlgNew}
	\dot{\hat{\theta}}_i (t)= \gamma_i\Phi_{2}(t)\left(Y_i (t)-\Phi_{2}(t)\hat{\theta}_i(t)\right),\;\gamma_i > 0.
\end{equation}
\item[{\bf S4}] Simulation of the three estimators above for a sufficiently rich input signal  
\begequ
\lab{upa}
    u_{pa}(t) = \sin(2\pi t) + \cos(3t),
\endequ
and for an input signal that is not sufficiently rich, but generates a regressor $\Omega$ which is IE, namely
\begequ
\lab{upb}
    u_{pb}(t) = e^{-2t}+e^{-1.5t}.
\endequ
\endenu

The following remarks concerning the theoretical results are in order 
\begite
\item For the sufficiently rich signal \eqref{upa} the three estimators yield consistent estimates.
\item For the not sufficiently rich signal \eqref{upb} the first and second estimators will not generate consistent estimates. For the first estimator this follows from the fact that, as shown in \cite[Theorems 2.7.2 and 2.7.3]{SASBODbook} sufficient richness of the plants input signal  is equivalent to PE of the regressor $\Omega(t)$. Regarding the DREM estimator with the regressor $\Delta(t)$, it was shown in  \cite[Proposition 2]{aranovskiy2019parameter} that DREM alone cannot relax the PE condition in the system identification problem. Hence, sufficient richness is {\em necessary} for parameter convergence.
\item On the other hand, the result of Proposition \ref{pro4} ensures that DREM with the new LRE will ensure convergence even for the input signal  \eqref{upb}.
\endite

The simulations were carried out for the system studied in  \cite[Section 5]{aranovskiy2019parameter}, that is $(a_0,a_1,b_0,b_1)=(2,1,1,2)$ and we choose $\lambda_1 = 20$ and $\lambda_0 = 100$. This yields $ \theta = \begin{bmatrix} 98 & 19 & 1 & 2 \end{bmatrix}.$

For all estimators we set $\hat{\theta}_i(0)=0$. The gain matrix for the first algorithm \eqref{eq:Grad_est} is 
\[
    \Gamma = 100 \operatorname{diag}(100,50,30,10).
\]
For the estimators \eqref{eq:gradientDREM} and\eqref{eq:gradAlgNew} we choose $\gamma_i = 1$, $i=1,\dots,4$. The parameters of the KRE \eqref{eq1} are $g = 100$, $\lambda = 30$. For the new LRE we choose $\beta = 0.51$. 

The simulation results, which corroborate the claims above, are shown in Figures \ref{fig:u1_thetaGRAD}-\ref{fig:u2_thetaDREMIMP}. Notice, in particular, that for the input signal  \eqref{upb} only DREM with the new LRE ensures convergence.
 
 \begin{figure}[h]
     \centering
     \includegraphics{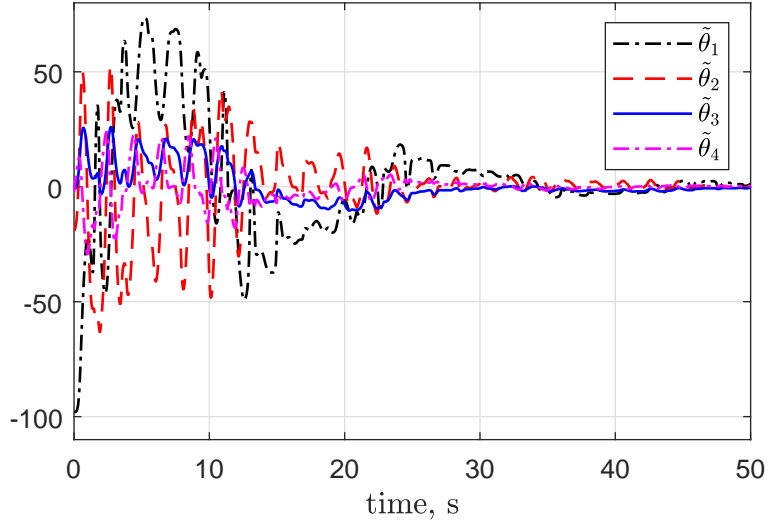}
     \caption{$\tilde{\theta}(t)$ with gradient algorithm \eqref{eq:Grad_est} and $u_{pa}(t)$}
     \label{fig:u1_thetaGRAD}
 \end{figure}
 
  \begin{figure}[h]
     \centering
     \includegraphics{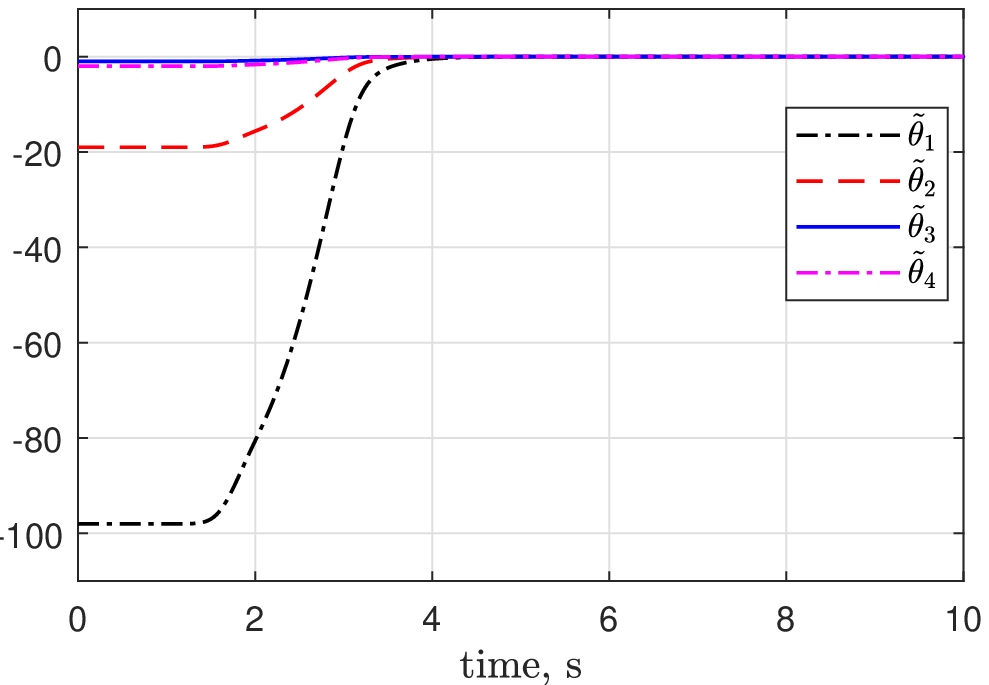}
     \caption{$\tilde{\theta}(t)$ with DREM procedure, estimator \eqref{eq:gradientDREM} and $u_{pa}(t)$}
     \label{fig:u1_thetaDREM}
 \end{figure}
 
 \begin{figure}[h]
     \centering
     \includegraphics{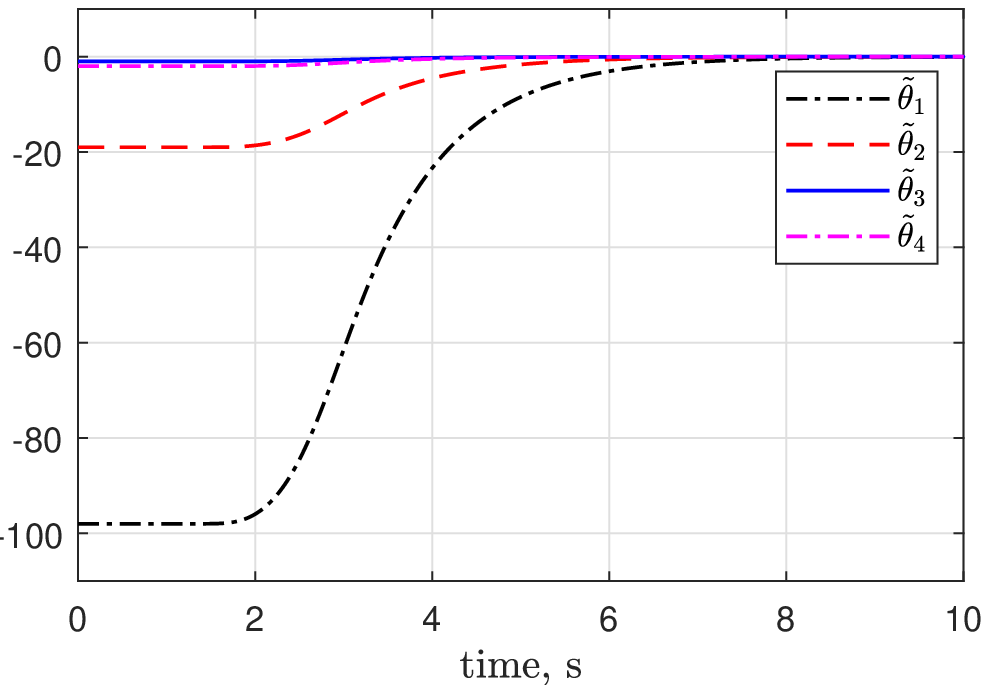}
     \caption{$\tilde{\theta}(t)$ with DREM procedure and new LRE, estimator \eqref{eq:gradAlgNew} and $u_{pa}(t)$}
     \label{fig:u1_thetaDREMIMP}
 \end{figure}
 
 \begin{figure}[h]
     \centering
     \includegraphics{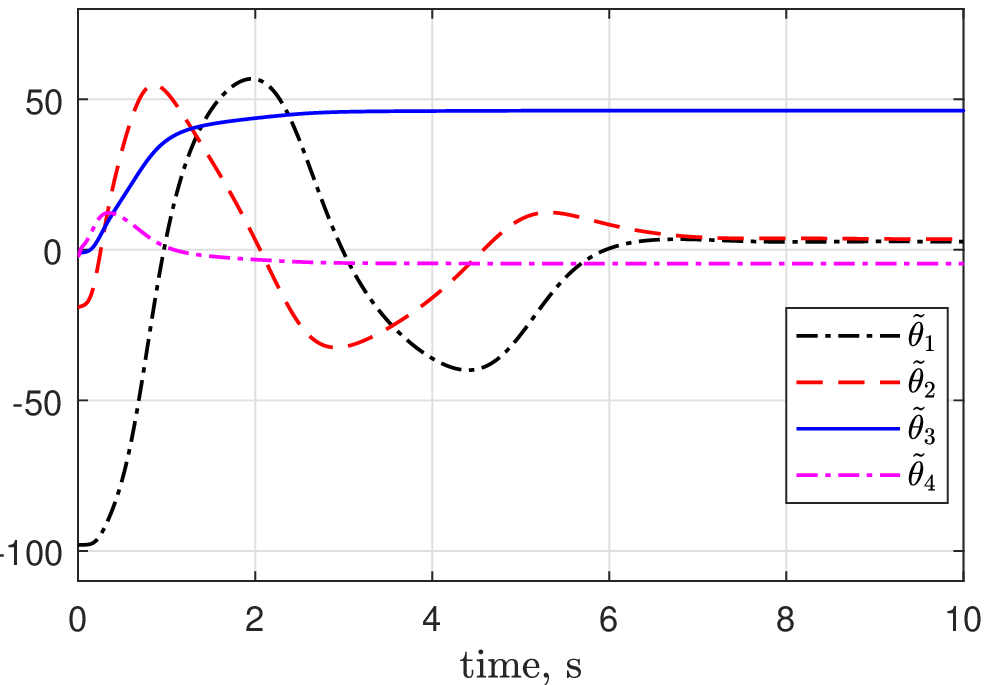}
     \caption{$\tilde{\theta}(t)$ with gradient algorithm \eqref{eq:Grad_est} and $u_{pb}(t)$}
     \label{fig:u2_thetaGRAD}
 \end{figure}
 
  \begin{figure}[h]
     \centering
     \includegraphics{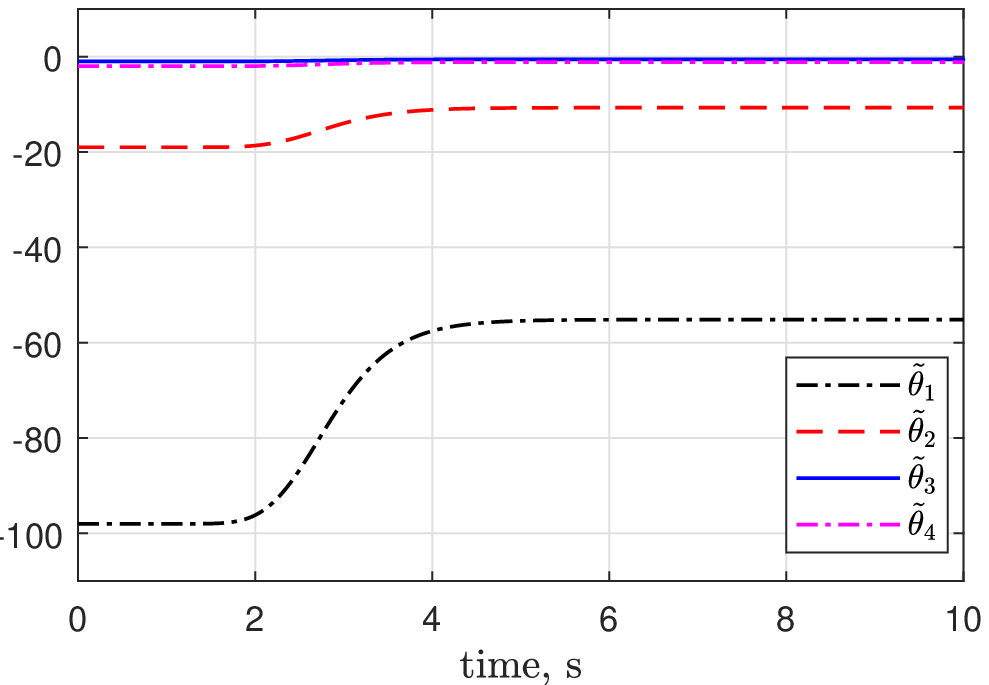}
     \caption{$\tilde{\theta}(t)$ with DREM procedure, estimator \eqref{eq:gradientDREM} and $u_{pb}(t)$}
     \label{fig:u2_thetaDREM}
 \end{figure}
 
 \begin{figure}[h]
     \centering
     \includegraphics{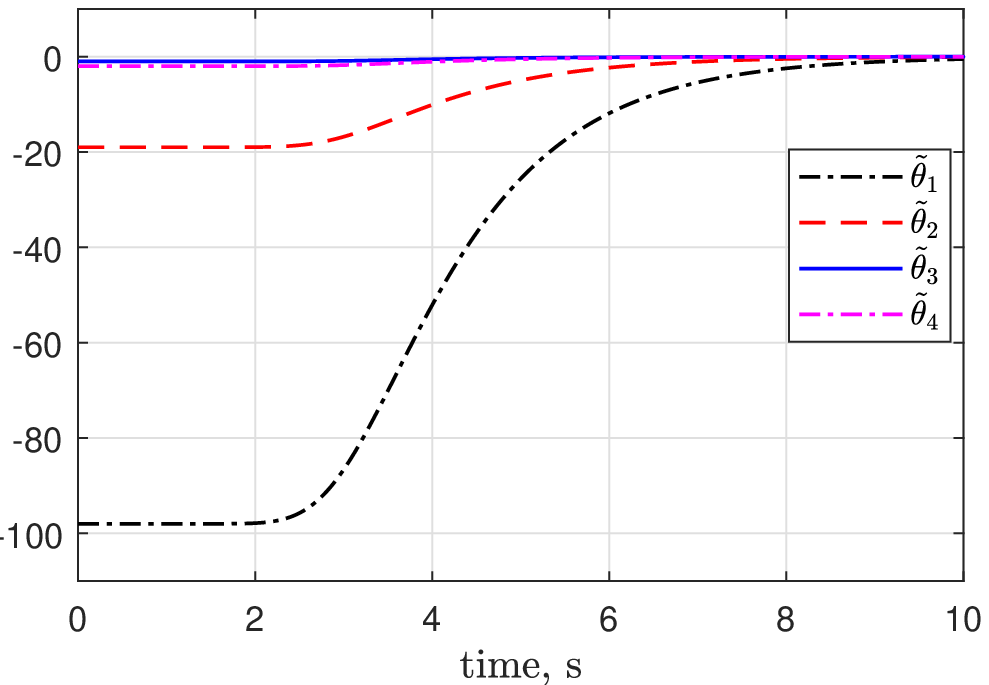}
     \caption{$\tilde{\theta}(t)$ with DREM procedure and new LRE, estimator \eqref{eq:gradAlgNew} and $u_{pb}(t)$}
     \label{fig:u2_thetaDREMIMP}
 \end{figure}
 
 To test the robustness of the various estimators bounded noise was added to the output signal, as shown in Fig. \ref{fig:y2_noise}), for the case of the input  $u_{pb}(t)$.
 
 \begin{figure}[h]
     \centering
     \includegraphics{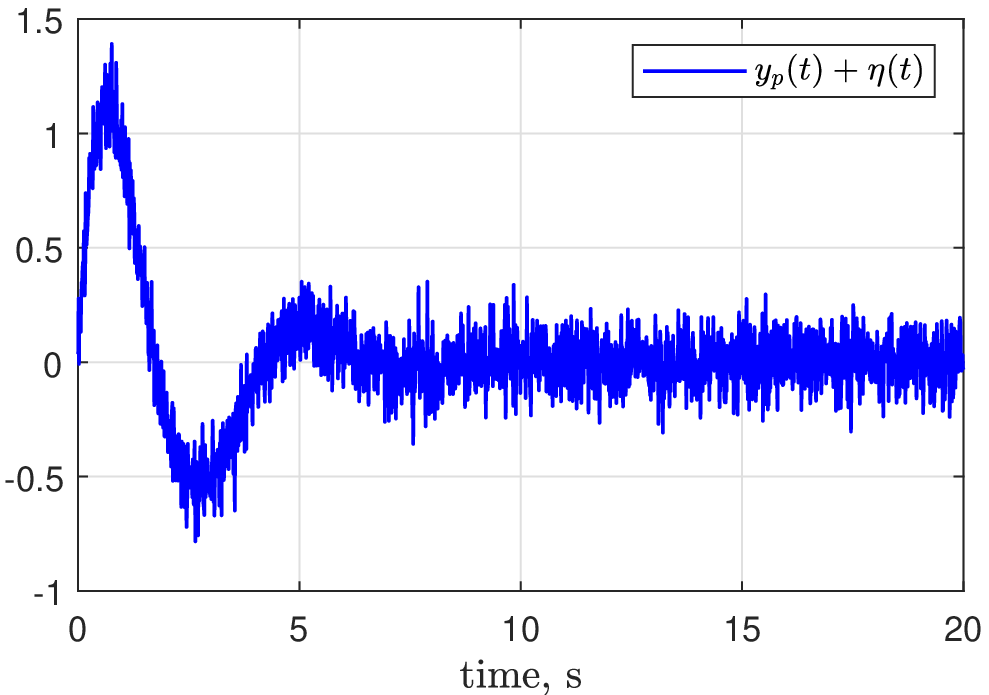}
     \caption{$y_p(t)$ with additive noise $\eta(t)$}
     \label{fig:y2_noise}
 \end{figure}

As expected, in this case none of the estimators ensures that the error $\tilde{\theta}(t)$  converges to zero, as shown in Figs. \ref{fig:u2_thetaGRAD_noise}-\ref{fig:u2_thetaDREMIMP_noise}. However, notice that the gradient algorithm \eqref{eq:Grad_est} actually diverges. On the other hand, while the steady state error of the estimator   with DREM \eqref{eq:gradientDREM} is quite large, the one of \eqref{eq:gradAlgNew} with the new LRE, is negligible---illustrating the robustness to additive noise of the the new scheme.

 \begin{figure}[h]
     \centering
     \includegraphics{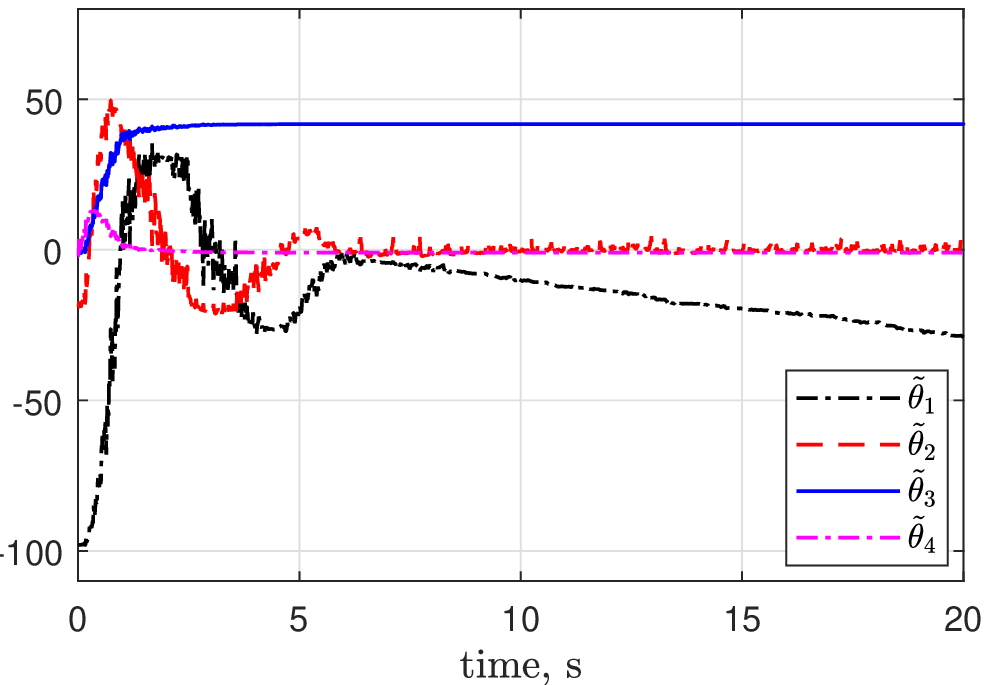}
     \caption{$\tilde{\theta}(t)$ (under the influence of $\eta(t)$) with gradient algorithm \eqref{eq:Grad_est} and $u_{pb}(t)$}
     \label{fig:u2_thetaGRAD_noise}
 \end{figure}
 
  \begin{figure}[h]
     \centering
     \includegraphics{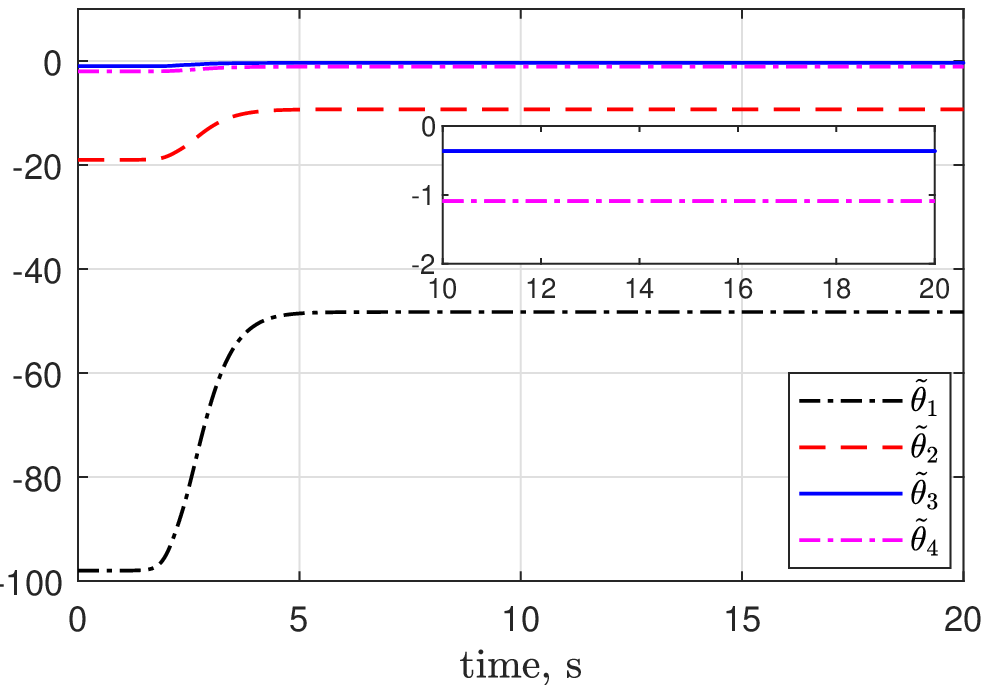}
     \caption{$\tilde{\theta}(t)$ (under the influence of $\eta(t)$) with DREM procedure, estimator \eqref{eq:gradientDREM} and $u_{pb}(t)$}
     \label{fig:u2_thetaDREM_noise}
 \end{figure}
 
 \begin{figure}[h]
     \centering
     \includegraphics{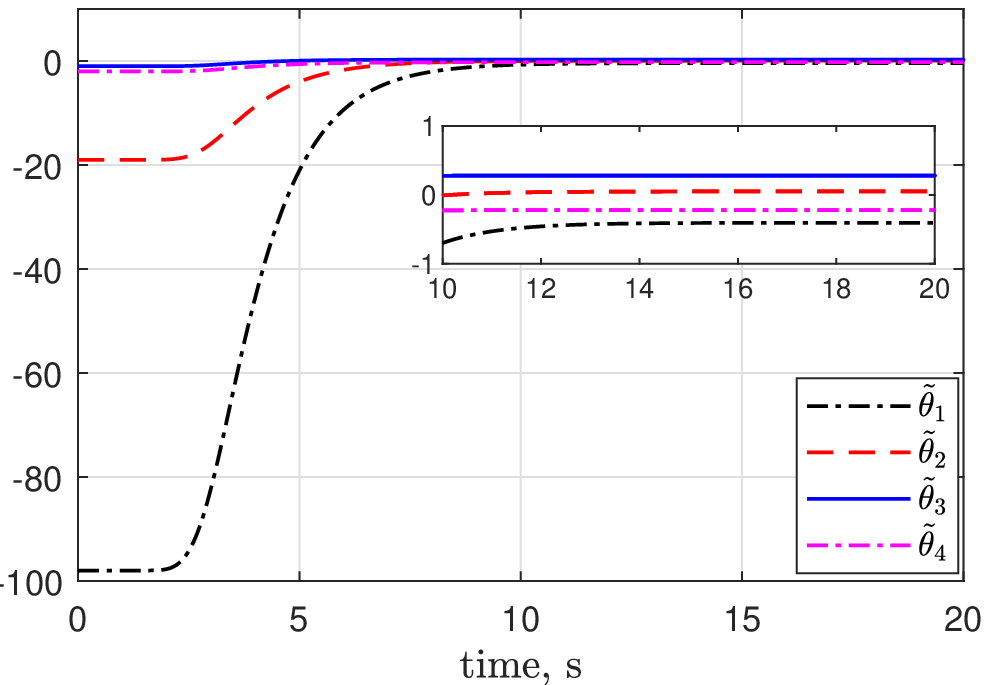}
     \caption{$\tilde{\theta}(t)$ (under the influence of $\eta(t)$) with DREM procedure and new filter, estimator \eqref{eq:gradAlgNew} and $u_{pb}(t)$}
     \label{fig:u2_thetaDREMIMP_noise}
 \end{figure}
%
\section{Simulations of the Discrete-time LRE Generator}
\label{sec6}
%
In this section, we present comparative simulations of the DT estimation of a scalar parameter $\theta \in \rea$ using the standard gradient descent adaptation with the original and the new regressor, that is, 
\begin{align}
\label{gradori}
\hat \theta_\bfori (k+1)=&  \hat \theta_\bfori (k)+\gamma \Delta(k) \left[ \mathcal{Y} (k) -\Delta (k)  \hat \theta_\bfori (k) \right] ,\;\gamma>0   \\
\hat \theta_\bfnew (k+1)=& \hat \theta_\bfnew (k)+\gamma \Phi_2(k) \left[ Y_2 (k) - \Phi_2 (k) \hat \theta_\bfnew (k) \right] ,\;\gamma>0  ,
\label{gradnew}
\end{align}
for three different signals $\Delta (k)$, namely:  
  \begin{eqnarray}
\begin{aligned}
    \Delta_a(k) & = e^{-3 k}\\
    \Delta_b(k) & =
    \left\{
    \begin{aligned}
         1 & &k \in[0,0.2] 
         \\
         0 & & k > \mbox{0.2}
    \end{aligned}
    \right. \\
    \Delta_c(k) & = {1\over 7 + k}. 
\end{aligned}
\label{taus}
\end{eqnarray}

Clearly, the three signals are not  PE and belong to ${\mathcal L}_2$. Hence, according with Proposition \ref{pro3} the estimator \eqref{gradori} will not converge. On the other hand, since they are IE, the estimator \eqref{gradnew} should guarantee convergence for small values of $T$. 

In the first simulations we consider the unknown parameter $\theta=5$ and select $\beta=\frac{3}{4}$, $\mu=0.4$, $b=0.1$ and $\gamma=0.1$. The initial conditions of the estimators are set as $\hat \theta_{\bfori}(0)=0$ and $\hat \theta_{\bfnew}(0)=0$. In the light of the key assumption \eqref{tsqu} we also check the effect of the size of the constant $T$, carrying out simulations using the values $T=0.01$, $T=0.1$ and $T=1.5$. 

The results of these simulations are given in Figs. \ref{figexp}-\ref{figfrac}, that confirm the predictions of  the theoretical analysis. We also notice that taking a large value for $T$ does not affect the steady-state performance, but it increases significantly the convergence time. The rationale for this behavior may be explained as follows. In the scenarios considered above the signals $\Delta(k)$ converge to zero, reducing the effect of the truncation error. 

\begin{figure}[htp]
    \centering
	\includegraphics[width = 0.8\textwidth]{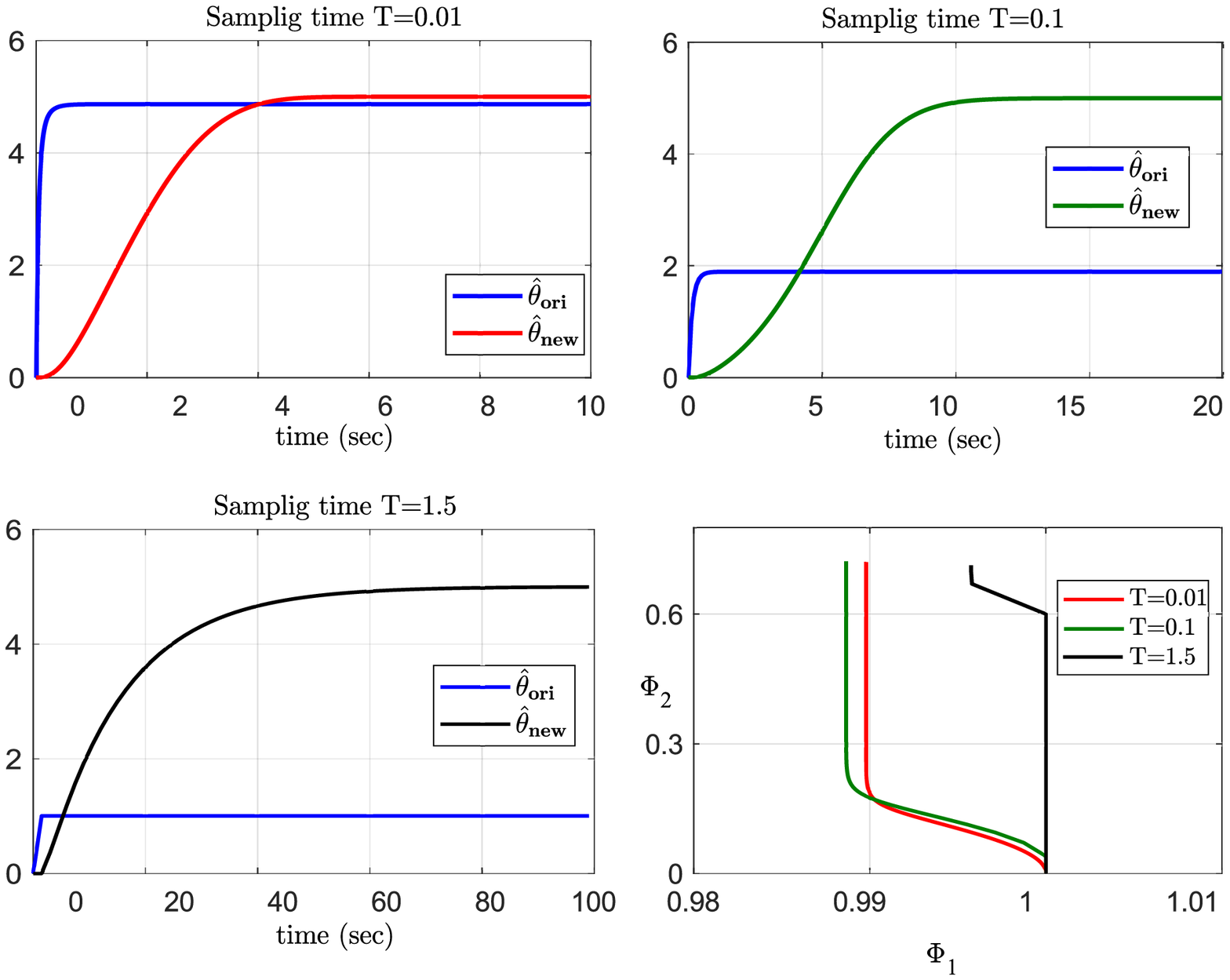}
    \caption{Estimates using the original and new  LRE and the phase portrait  of $\Phi(k)$ with the signal $\Delta_a(k)$ }
    \label{figexp}
\end{figure}

\begin{figure}[htp]
    \centering
	\includegraphics[width = 0.75\textwidth]{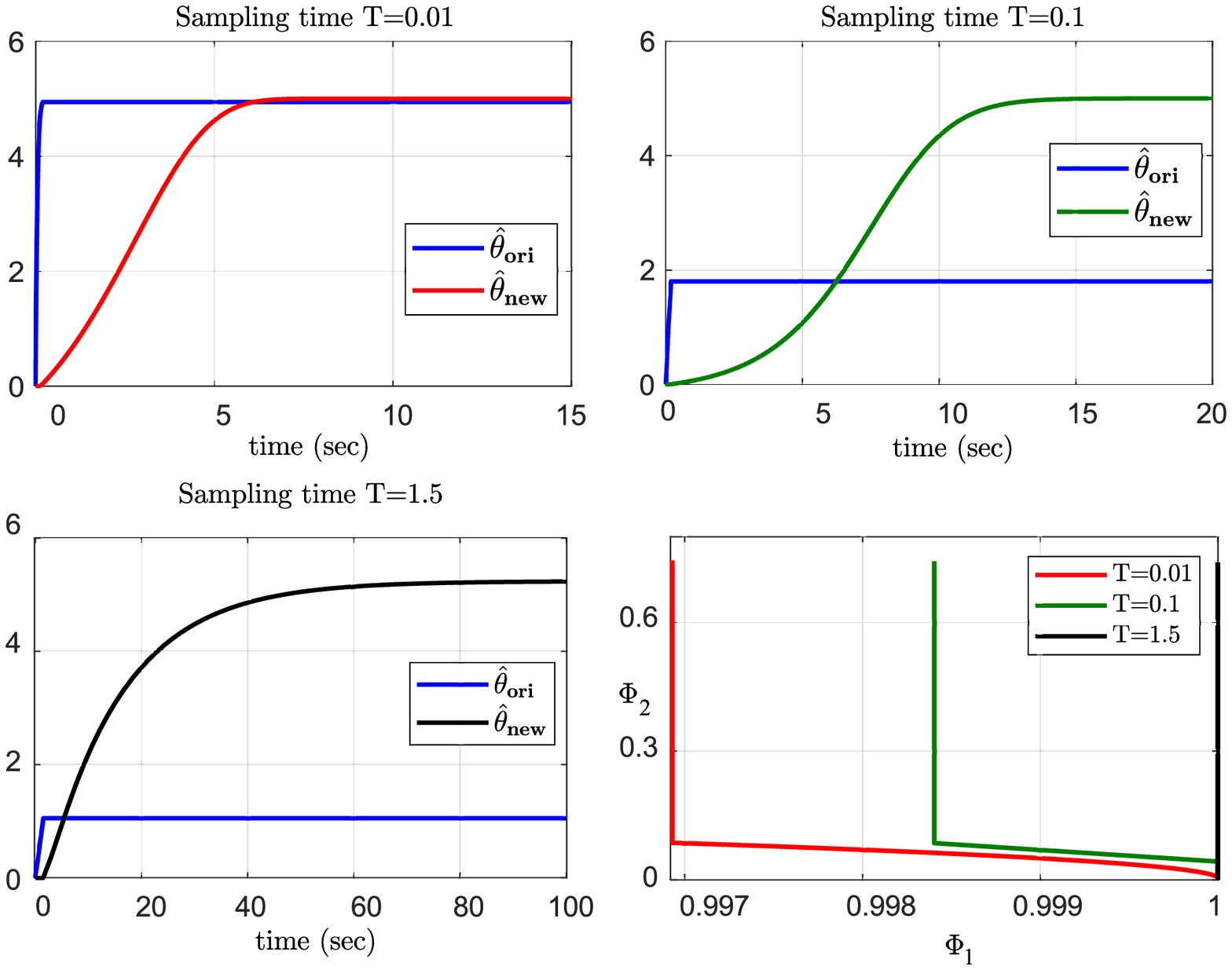}
    \caption{Estimates using the original and new  LRE and the phase portrait  of $\Phi(k)$ with the signal $\Delta_b(k)$  }
    \label{figesca}
\end{figure}

\begin{figure}[htp]
    \centering
	\includegraphics[width = 0.8\textwidth]{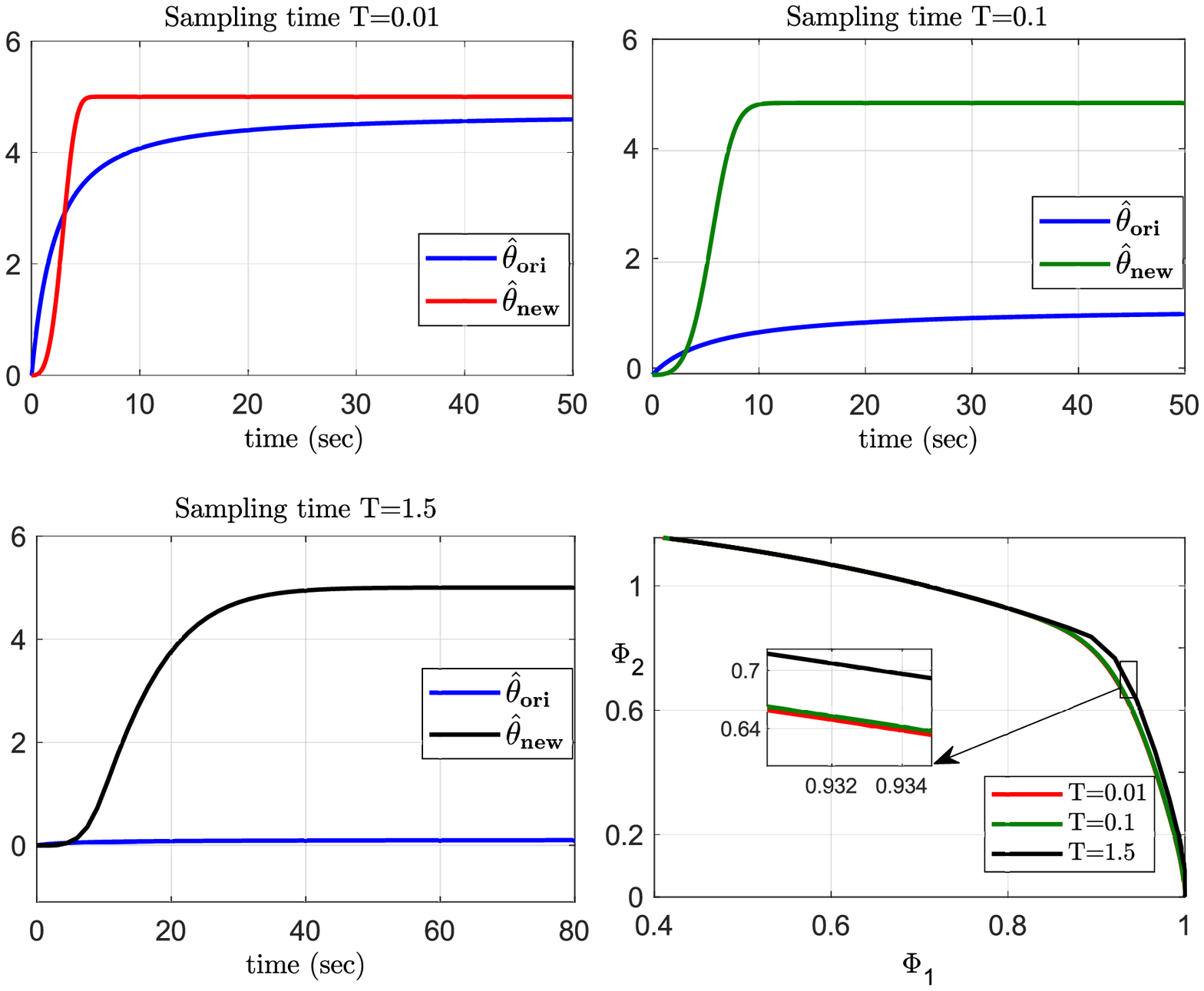}
    \caption{Estimates using the original and new  LRE and the phase portrait    of $\Phi(k)$ with the signal $\Delta_c(k)$ }
    \label{figfrac}
\end{figure}

The situation is different if $\Delta(k)$ {\em does not} converge to zero, for instance if it is PE. In that case, it is expected that the performance is degraded with increasing values of $T$. To validate this conjecture we carried  out a simulation with the PE signal $\Delta_d(k)=\cos\left(\frac{\pi}{4}k\right)$. In this case the estimator \eqref{gradori} always converges. However, \eqref{gradnew}  will ensure parameter convergence only for small values of $T$. This is corroborated with the plots of  Fig. \ref{figcos} that show how the performance of the estimator  \eqref{gradnew} degrades with increasing $T$. Moreover, from the   phase portrait we notice that for $T=1.5$ the signal $\Phi(k)$ does not live in the gray section indicated in Fig. \ref{fig1}, violating the predictions of the theory because  assumption \eqref{tsqu} is not valid anymore. Furthermore, using the same $\Delta_d(k)$ and $T=1.5$, and choosing a large adaptation gain $\gamma=1.6$---in contrast to $\gamma=0.1$ used before -- the  estimator  \eqref{gradnew} becomes {\em unstable} as shown in  Fig. \ref{figcos2}. It is important to underscore that the main motivation for the introduction of the LRE generator is for the case when the original regressor is not PE, therefore the scenario considered in these simulations will not be encountered in practice.

\begin{figure}[htp]
    \centering
	\includegraphics[width = 0.8\textwidth]{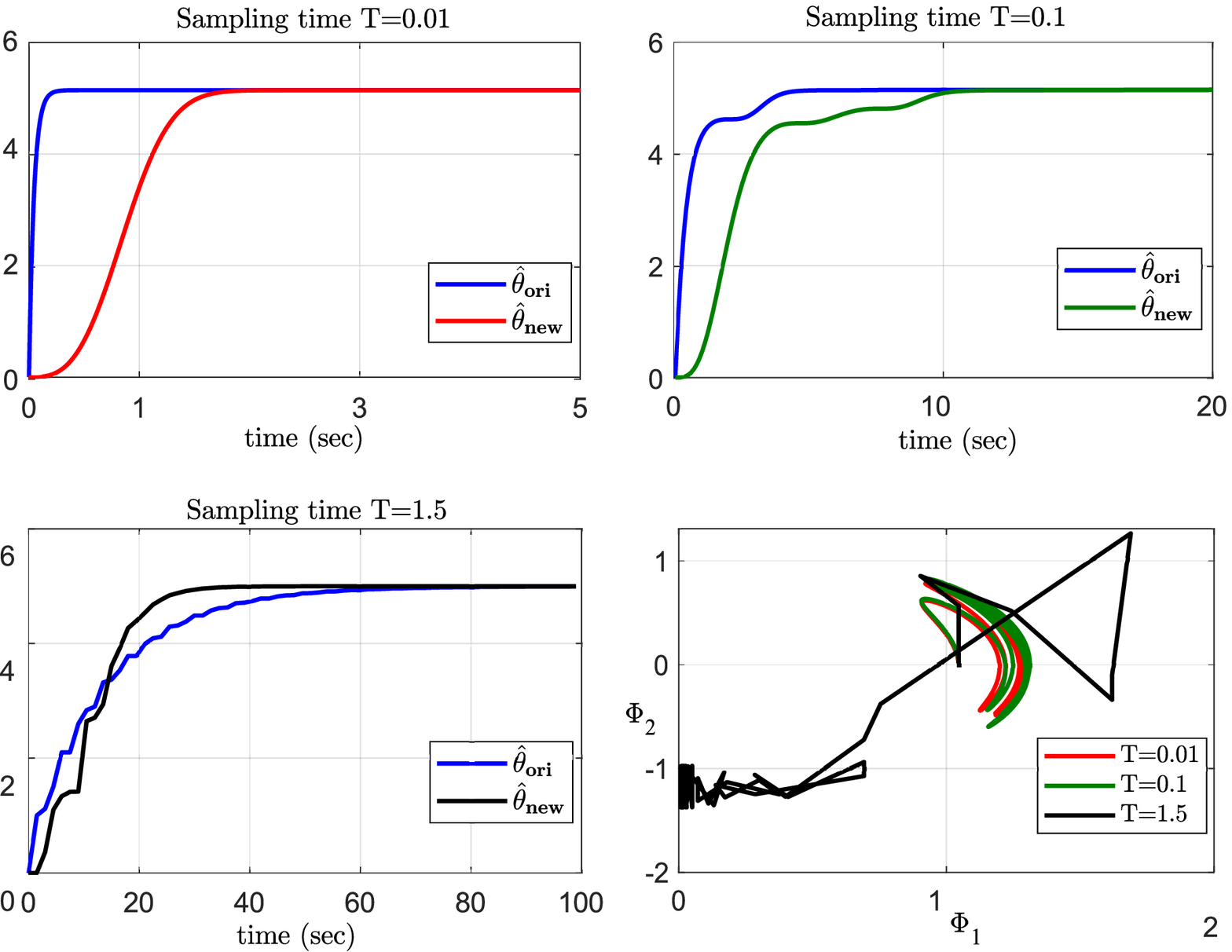}
    \caption{Estimates using the original and new  LRE and the phase portrait  of $\Phi(k)$ with the signal $\Delta_d(k)$ }
    \label{figcos}
\end{figure}

\begin{figure}[htp]
    \centering
	\includegraphics[width = 0.8\textwidth]{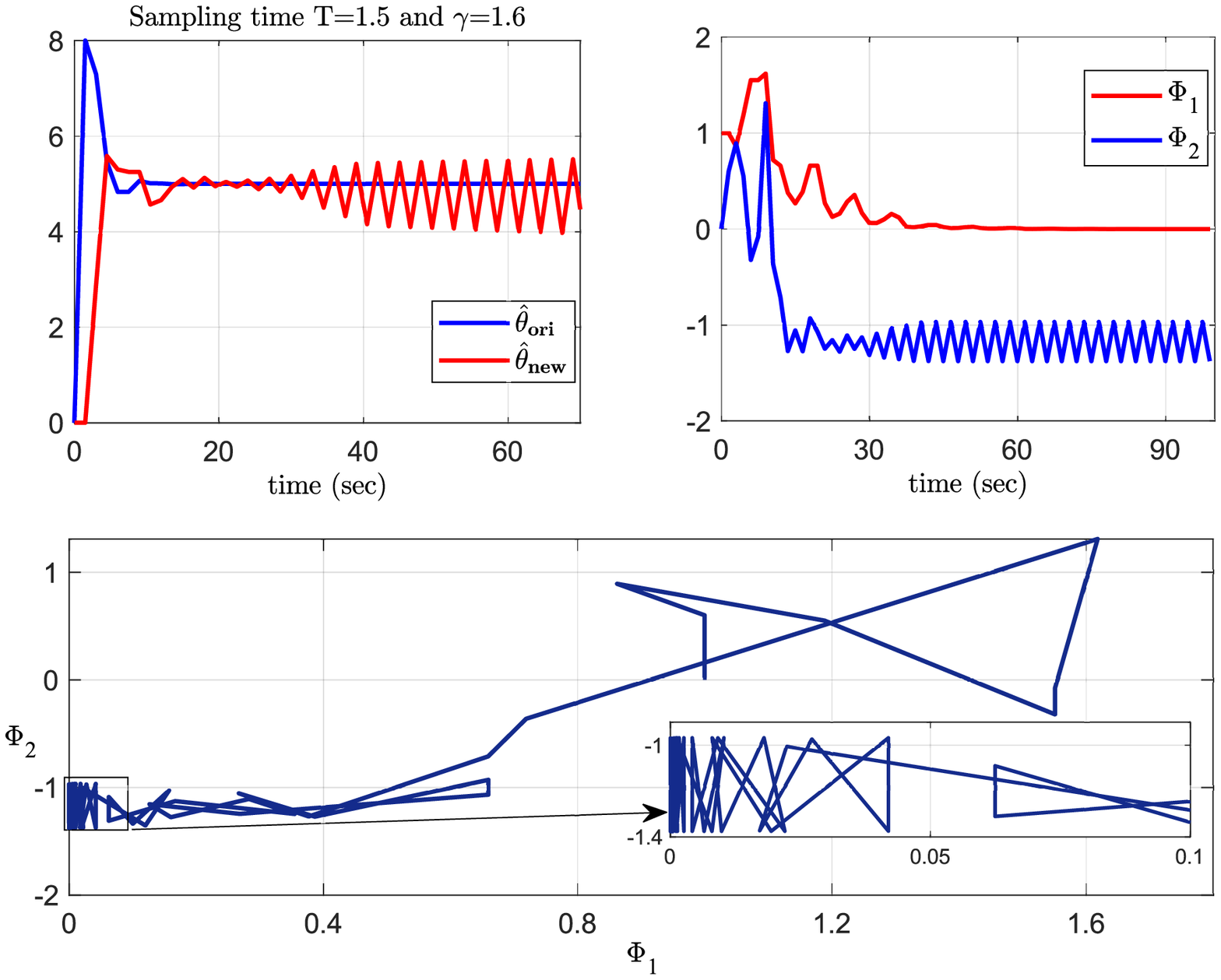}
    \caption{Estimates using the original and new  LRE and the phase portrait  of $\Phi(k)$ with the signal $\Delta_d(k)$ and $\gamma=1.6$}
    \label{figcos2}
\end{figure}

%
\section{Conclusions and Future Research}
\label{sec7}
%
We have proposed new CT and DT estimators for the LRE \eqref{orilre} that ensure global exponential convergence  under the weak assumption that the regressor $\Omega$ is IE. To the best of our knowledge, this is the first time that such a result is established for a truly on-line estimator.\footnote{As mentioned in Section \ref{sec1} the concurrent \cite{CHOetal} and composite learning \cite{PANYU} estimators involve an off-line operation of data monitoring and stacking. See also \cite{ORTproieee} where it is shown that allowing off-line calculations it is possible to ensure finite convergence time with an IE assumption.}

Our current efforts are directed towards the relaxation of the assumption  \eqref{tsqu} for the DT estimator and to further study the effect of noise in the estimators performance.
%

\appendix
\section{List of Acronyms}\medskip
%
\begin{table}[h]
	\centering
	\label{tab:2}
	\renewcommand\arraystretch{1.6}
	\begin{tabular}{l|r}
		\hline\hline
		CT & Continuous-time \\
		DREM  &  Dynamic regressor extension and mixing \\
		DT & Discrete-time \\
		IE & Interval excitation \\		
		KRE & Kreisselmeier's regressor extension \\
		LRE & Linear regressor equation \\
		LTI &  Linear time-invariant  \\
		PE &  Persistent excitation  \\
		\hline\hline
	\end{tabular}
\end{table}
\end{document}